\newtheorem{theorem}{Theorem}[section]
\newtheorem{lemma}[theorem]{Lemma}
\newtheorem{proposition}[theorem]{Proposition}
\theoremstyle{definition}
\theoremstyle{remark}
\theoremstyle{definition}
\numberwithin{equation}{section}
\newcommand{\set}[1]{\left\{#1\right\}}
\newcommand{\su}[1]{\mathfrak{su}(#1)}
\newcommand{\ualg}[1]{\mathfrak{u}(#1)}
\newcommand{\gau}{\mathfrak{t}}
\newcommand{\matrdpd}[4]{\left( \begin{array}{cc} #1 & #2 \\[0.2cm]  #3 & #4 \end{array} \right)}
\newcommand{\R}{\mathbb R}
\newcommand{\Z}{\mathbb Z}
\newcommand{\C}{\mathbb C}
\newcommand{\HH}{\mathbb H}
\newcommand{\A}{\mathbf A}
\newcommand{\SU}{\mathbf S}
\newcommand{\eps}{\varepsilon}
\newcommand{\lie}{\mathfrak{g}}
\newcommand{\LL}{{\mathcal L}}
\begin{document}

\title[Solitary waves and vortices in non-abelian gauge theories with matter]{Solitary waves and vortices in non-Abelian gauge theories with matter}%
\author{Vieri Benci}%
\address{ Dipartimento di Matematica Applicata, Universit\`a di Pisa, via F. Buonarroti n.1/c, 56127 Pisa, Italy\newline \indent Department of Mathematics, College of Science, King Saud University, Riyadh, 11451, Saudi Arabia}
\email{benci@dma.unipi.it}%
\author{Claudio Bonanno}%
\address{Dipartimento di Matematica Applicata, Universit\`a di Pisa, via F. Buonarroti n.1/c, 56127 Pisa, Italy}%
\email{bonanno@mail.dm.unipi.it}%
\thanks{This work was partially supported by King Saud University, Riyadh. }
\begin{abstract}
We consider a non-Abelian gauge theory in $\R^{4}$ equipped with the Minkowski metric, which provides a model for the interaction between a bosonic matter field and a gauge field with gauge group $SU(2)$. We prove the existence of solitary waves which are related to those found for the Klein-Gordon-Maxwell equations.
\end{abstract}
\maketitle
\section{Introduction} \label{intro}

We study the existence of \textit{solitary waves},  that is solutions with localised energy and charge, for equations modelling interactions between matter and a gauge field in $\R^{4}$ equipped with the Minkowski metric. When matter is given by the fundamental representation of a field with only one internal symmetry, that is $\psi: \R^{4}\to \C$ is the fundamental representation of $U(1)$, then one speaks of \textit{Abelian gauge theories}. The existence of \textit{solitary waves} in Abelian gauge theories, and in particular for the nonlinear Klein-Gordon-Maxwell equations, has been studied in \cite{coleman} under the name of \textit{charged} or \textit{gauged Q-balls}, and more recently in \cite{bf}, \cite{bf-vort} and \cite{long}. In particular in \cite{bf-vort} it has been proved the existence of \textit{spinning Q-balls}, that is solitary waves with non-vanishing angular momentum, also called \textit{vortices}. In all the cited papers the solitary waves are found by using a particular ansatz for the solution, namely 
$$\psi(t,x)= u(x) e^{\imath S(t,x)}\in \C.$$
In this paper we prove existence of solitary waves and vortices for non-Abelian gauge theories, in particular in the case of a matter field with three internal symmetries, namely with $\psi: \R^{4}\to \C^{2}$ being the fundamental representation of the gauge group $SU(2)$. The part of the Lagrangian density relative to $\psi$ is the Lagrangian of the nonlinear wave equation with covariant derivatives and it models a bosonic field. 

In our model the nonlinear term $W(s)$, which models the self-interaction of the matter field, is not of Higgs-type, but it is non-negative and vanishes only at the origin. This choice is justified by the fact that in the Abelian case the following holds (\cite{bf-arx} and Theorem \ref{no-higgs} below, and \cite{bf-vort}): 
\begin{itemize}
\item if $W(0)>0$ there are not finite energy solutions;
\item even in the case in which $W(s)$ is convex there is a symmetry breaking in the sense that there are solutions, not invariant under the action of the full gauge group, which minimize the energy constrained by some other conserved quantity.
\end{itemize}  

In Section \ref{gt}, we recall the basic notions of gauge theories in the Minkowski space. Then, in Section \ref{sec:nonab-case}, we develop the theory in the $SU(2)$ case, writing the matter field in polar form using the exponential map and we give our existence result.

\section{Gauge theories in $\R^{4}$} \label{gt}

Let $G$ be a subgroup of $U(N)$, the unitary group in $\C^{N}$, and
denote by $\Lambda^{k}(\mathbb{R}^{4},\lie)$ the set of $k$-forms
defined in $\mathbb{R}^{4}$ with values in the Lie algebra $\mathfrak{g}$ of
the group $G$. A 1-form 
$$
\Gamma =\sum_{j=0}^{3}\Gamma _{j}dx^{j}\in \Lambda ^{1}(\mathbb{R}^{4}, \mathfrak{g})
$$
is called \textit{connection form}. The \textit{matter field} $\psi$ is a smooth $\C^{N}$-valued function
$$\psi : \R^{4} \to \C^{N}$$
and by using a connection $\Gamma$ we consider the operator
$$
d_{\Gamma }=\Lambda ^{0}(\mathbb{R}^{4},\C^{N})\rightarrow \Lambda
^{1}(\mathbb{R}^{4},\C^{N})
$$
called \textit{covariant differential} and defined by 
$$
d_{\Gamma }=d+q\, \Gamma =\sum_{j=0}^{3}\left( \partial _{j}+q \Gamma _{j}\right)
dx^{j}
$$
where $q$ is a real positive parameter. The operators 
$$
D_{j}=\partial _{j}+ q\Gamma _{j}:\mathcal{C}^{1}\left( \mathbb{R}^{4},\mathbb{
C}^{N}\right) \rightarrow \mathcal{C}^{0}\left( \mathbb{R}^{4},\mathbb{C}^{N}\right) ,\qquad j=0,\dots,3
$$
are called \textit{covariant derivatives}. Hence
\begin{equation} \label{diff-for-matter}
d_{\Gamma }\psi =\sum_{j=0}^{3}\, D_{j}\psi\, dx^{j} = \sum_{j=0}^{3}\, \left( \partial_{j} \psi + q \Gamma_{j} \psi \right)\, dx^{j}
\end{equation}
where by $\Gamma_{j} \psi$ we denote the action of an element of $\lie$ on $\psi$ by the usual matrix representation. The parameter $q$ plays the role of the coupling constant between the matter field and the connection $\Gamma$.

The covariant differential can be extended to $k$-forms $\Lambda^{k}(\mathbb{R}^{4},\C^{N})$ by letting
$$
d_{\Gamma}(\psi \otimes \omega) = d_{\Gamma}\psi \otimes \omega + \psi\otimes d_{\Gamma}\omega\ \in \Lambda^{k+1}(\mathbb{R}^{4},\C^{N}) \qquad \forall\, \psi \in \mathcal{C}^{1}\left( \mathbb{R}^{4},\mathbb{C}^{N}\right),\ \omega \in \Lambda^{k-1}(\mathbb{R}^{4},\C^{N})
$$
but in general the operator
$$
d_{\Gamma }\circ d_{\Gamma }: \Lambda ^{0}(\mathbb{R}^{4},\C^{N}) \rightarrow \Lambda ^{2}(\mathbb{R}^{4},\C^{N})
$$
does not vanish, in fact 
$$
\left( d_{\Gamma }\circ d_{\Gamma }\right) \psi =\frac q2\, \sum_{j,k=0}^{3}\left( \partial _{k}\Gamma _{j}-\partial _{j}\Gamma _{k}+ 
q \left[ \Gamma _{k},\Gamma _{j}\right] \right) \psi \ dx^{k}\wedge dx^{j}
$$
and the 2-form 
\begin{equation} \label{curvature}
F_{\Gamma }=\frac q2\, \sum_{j,k=0}^{3} F_{kj} dx^{k}\wedge dx^{j}\ \in \Lambda^{2}(\mathbb{R}^{4},\lie), \qquad F_{kj}:= \partial _{k}\Gamma _{j}-\partial_{j}\Gamma _{k}+q \left[ \Gamma _{k},\Gamma _{j}\right] \in \lie
\end{equation}
is called \textit{curvature}. We now extend the covariant differential $d_{\Gamma}$ to $k$-forms with values in $\lie$. For two 1-forms in $\Lambda ^{1}(\mathbb{R}^{4},\lie)$
\begin{equation} \label{esempi}
A = \sum_{j=0}^{3}\,A_{j}\,dx^{j} \qquad B =\sum_{j=0}^{3}\,B_{j}\,dx^{j}
\end{equation}
one can define
\begin{equation*}
A\wedge B:=\sum_{i,j=0}^{3}\,A_{i}B_{j}\,dx^{i}\wedge dx^{j}\ \in \Lambda ^{2}(
\mathbb{R}^{4},\mathfrak{g})
\end{equation*}
and
\begin{equation}
[A,B]:=\sum_{i,j=0}^{3}\,[A_{i},B_{j}]\,dx^{i}\wedge dx^{j} \ \in \Lambda ^{2}(
\mathbb{R}^{4},\mathfrak{g})
\end{equation}
Notice that $[A,B] = A\wedge B + B\wedge A$, hence in particular $[A,A] = 2A\wedge A$. By using these operators, the covariant differential can be extended to an operator acting on $0$-forms with values in $\lie$
$$
d_{\Gamma }=\Lambda ^{0}(\mathbb{R}^{4},\mathfrak{g})\rightarrow \Lambda
^{1}(\mathbb{R}^{4},\mathfrak{g})
$$
by the same definition
$$
d_{\Gamma} \alpha =\sum_{j=0}^{3}\left( \partial _{j} \alpha +q \, [\Gamma_{j},\alpha] \right) dx^{j} \qquad \forall\, \alpha \in {\mathcal C}^{1}(\R^{4}, \lie)
$$
letting
\begin{equation} \label{der-cov-gauge}
D_{j} : \mathcal{C}^{1}\left( \mathbb{R}^{4},\lie \right) \rightarrow \mathcal{C}^{0}\left( \mathbb{R}^{4},\lie \right), \qquad D_{j}\alpha = \partial_{j}\alpha +q \, [\Gamma_{j},\alpha], \quad j=0,\dots,3
\end{equation}
Then $d_{\Gamma}$ can be extended to $\Lambda^{k}(\mathbb{R}^{4},\mathfrak{g})$ as above. For example if $A \in \Lambda ^{1}(\mathbb{R}^{4},\lie)$ then
\begin{equation*}
d_{\Gamma }A=dA+q [\Gamma ,A]=\sum_{i,j=0}^{3}\,\left( \partial
_{i}A_{j}+q[\Gamma _{i},A_{j}]\right) \,dx^{i}\wedge dx^{j}
\end{equation*}
Using this notation we have
$$F_{\Gamma} = d\Gamma + \Gamma \wedge \Gamma$$

Now we need to define a scalar product on $\Lambda^{k}(\mathbb{R}^{4}, \mathfrak{g})$. First we recall the definition of the Hilbert product on $\mathfrak{g}$ given by
\begin{equation} \label{hilb}
\left\langle U,V\right\rangle =Tr(U^{\ast }V), \qquad \| U \|^{2} := \left\langle U,U\right\rangle 
\end{equation}
where $U^{\ast }$ denote the adjoint of $U$. Next, we equip $\mathbb{R}^{4}$ with the Minkowski quadratic form given by 
\begin{equation*}
< v,v>_{M}=-\left\vert v_{0}\right\vert
^{2}+\sum_{j=1}^{3}\left\vert v_{j}\right\vert ^{2}.
\end{equation*}
where $v=(v_{0},v_{1},v_{2},v_{3})$ is a 4-vector. The Minkowski quadratic form can be extended to the space of the differential forms $\alpha \in \Lambda ^{k}(\mathbb{R}^{4}),$ and it will be denoted by 
\begin{equation} \label{mink}
< \alpha ,\alpha >_{M}.  
\end{equation}
For future reference we recall the relations 
\begin{equation}
<dx^{i},dx^{j}>_{M}=0\quad i\not=j\qquad <dx^{0},dx^{0}>_{M}=-1\qquad
<dx^{i},dx^{i}>_{M}=1\quad i=1,2,3  \label{come-1-forme}
\end{equation}
\begin{equation}
<dx^{0}\wedge dx^{i},dx^{0}\wedge dx^{i}>_{M}=-1\quad i=1,2,3\qquad
<dx^{i}\wedge dx^{j},dx^{i}\wedge dx^{j}>_{M}=1\quad i,j=1,2,3
\label{come-2-forme}
\end{equation}
and the products vanish in the other cases. 

Now, since $\Lambda ^{k}(\mathbb{R}^{4},\C^{N})=\C^{N} \otimes \Lambda
^{k}(\mathbb{R}^{4})$ and $\Lambda ^{k}(\mathbb{R}^{4},\mathfrak{g})=\mathfrak{g}\otimes \Lambda^{k}(\mathbb{R}^{4})$, let us define the Minkowsky product on $\Lambda ^{k}(\mathbb{R}^{4},\C^{N})$ as
\begin{equation} \label{prod-forme-c}
<v \otimes \omega ,w \otimes \nu >_{M}:=\, (v,w)_{\C^{N}} <\omega ,\nu >_{M}
\end{equation}
and on $\Lambda ^{k}(\mathbb{R}^{4},\mathfrak{g})$ as
\begin{equation} \label{prod-forme-e}
<U\otimes \omega ,V\otimes \nu >_{M}:=\left\langle U,V\right\rangle <\omega ,\nu >_{M}
\end{equation}
In particular, given two 1-forms as in (\ref{esempi}), we have that
\begin{equation}
<A,B>_{M}\ =\sum_{i,j=0}^{3}\,tr(A_{j}^{\ast }\,B_{j})\,<dx^{i},dx^{j}>_{M}
\label{prod-forme-g}
\end{equation}
and using (\ref{hilb}) and (\ref{come-1-forme})
\begin{equation} \label{auto-prod-1f}
<A,A>_{M}=-\left\Vert A_{0}\right\Vert ^{2}+\sum_{j=1}^{3}\left\Vert
A_{j}\right\Vert ^{2}
\end{equation}

We are now ready to define the Lagrangian density of a gauge theory depending on the matter field $\psi$ and on the connection $\Gamma$. Recall that the parameter $q$ introduced in the definition of the covariant derivatives play the role of the coupling constant between the two fields. Set 
\begin{equation} \label{elle0}
\mathcal{L}_{0}(\psi,\Gamma) :=- \frac{1}{2}< d_{\Gamma }\psi
,d_{\Gamma }\psi >_{M} = \frac{1}{2} \left| D_{0}\psi \right|_{\C^{N}}^{2}-\frac{1}{2} \sum_{j=1}^{3}\left| D_{j}\psi \right|_{\C^{N}}^{2}
\end{equation}
where we have used (\ref{come-1-forme}), (\ref{prod-forme-c}) and the definition (\ref{diff-for-matter}) and
\begin{equation} \label{elle1-f}
\mathcal{L}_{1}(\Gamma):=-\frac{1}{2q^{2}}< F_{\Gamma
},F_{\Gamma }>_{M}=-\frac{1}{2q^{2}}< d\Gamma
+\Gamma \wedge \Gamma ,d\Gamma +\Gamma \wedge \Gamma >_{M},
\end{equation}
which using (\ref{curvature}) and (\ref{come-2-forme}) becomes
\begin{equation} \label{elle1}
\mathcal{L}_{1}(\Gamma)= \frac 12\, \sum_{j=1}^{3}\, \| F_{0j} \|^{2} -\frac 14\, \sum_{k,j=1}^{3}\, \| F_{kj} \|^{2}.
\end{equation}

A pure gauge field $\Gamma \in \Lambda^{1}(\mathbb{R}^{4},\mathfrak{g})$
by definition (see e.g. \cite{rub}, \cite{yangL}) is a critical point of the
action functional
\begin{equation*}
\mathcal{S}_{1}(\Gamma )=\int \mathcal{L}_{1}(\Gamma )\, dxdt;
\end{equation*}
whereas the action
\begin{equation*}
\mathcal{S}_{0}(\psi,\Gamma )=\int \mathcal{L}_{0}(\psi ,\Gamma ) \, dxdt;
\end{equation*}
formalises the reaction of the matter field $\psi$ to the gauge field $\Gamma$. A \textit{gauge theory ``with matter"} is then concerned with the functional
\begin{equation*}
\mathcal{S}_{01}(\psi ,\Gamma )=\int \left( \mathcal{L}_{0}+\mathcal{L}_{1}\right) \, dxdt.
\end{equation*}%
Since we are interested in the existence of solitary waves and solitons, we
add to the above lagrangian $\mathcal{L}_{0}+\mathcal{L}_{1}$ a nonlinear term $W$ and so we are concerned with the action
\begin{equation} \label{completa}
\mathcal{S}(\psi ,\Gamma )=\int \mathcal{L}(\psi ,\Gamma) \, dxdt,\qquad \mathcal{L}=\mathcal{L}_{0}+\mathcal{L}_{1}-W(\psi),
\end{equation}%
where $W:\mathbb{C}^{N}\rightarrow \mathbb{R}$ is a function which is
assumed to be $G$-invariant, namely $W(g\psi )=W(\psi )$ for all $g\in G$. In particular we assume that there exists a $C^{2}$ real function $f:\R^{+} \to \R$ such that
\begin{equation}  \label{4}
W(\psi ) = f(|\psi|), \qquad W'(\psi) = f'(|\psi|)\, \frac{\psi}{|\psi|} 
\end{equation}
and $f$ satisfies
\begin{itemize}
\item[(W1)] $f(s) \ge 0$ for all $s\ge 0$;

\item[(W2)] $f(0)=f'(0)=0$ and $f''(0) = m^2 >0$;

\item[(W3)] there exists $s_0 \in \R^+$ such that $f(s_0) < \frac{m^2}{2}\, s^2$

\item[(W4)] there exists $s_1 > s_{0} \in \R^+$ such that $f'(s_1) \ge s_{1}$
\end{itemize}

\begin{proposition} \label{deriv-azione}
The Euler-Lagrange equations relative to (\ref{completa}) have the following
form:
$$
\left\{
\begin{array}{ll}
D_{0}^{2}\psi -\sum_{j=1}^{3}D_{j}^{2}\psi + W'(\psi) = 0  & (E\psi) \\[0.2cm]
\Re \left\langle A , \sum_{j=1}^{3}D_{j} F_{0j} + q\, D_{0}\psi \cdot \psi^{*} \right\rangle =0 \quad \forall\, A \in \lie & (E\Gamma0) \\[0.2cm]
\Re \left\langle A , D_{0} F_{0j} - \sum_{\ell\not = j} D_{\ell} F_{\ell j} + q\, D_{j}\psi \cdot \psi^{*} \right\rangle =0 \quad \forall\, A \in \lie & (E\Gamma j), \quad j=1,2,3  
\end{array}
\right.
$$
where $D_{k}\psi \cdot \psi^{*}$ denotes the matrix multiplication between the column vector $D_{k}\psi$ and the row vector $\psi^{*}$, and we recall the definitions of the covariant derivatives (\ref{diff-for-matter}) for the matter field $\psi$ and (\ref{der-cov-gauge}) for $\lie$-valued functions.
\end{proposition}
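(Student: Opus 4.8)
The plan is to derive the Euler--Lagrange equations by computing the first variation of the action $\mathcal{S}(\psi,\Gamma)$ in (\ref{completa}) with respect to independent variations of the matter field $\psi$ and of each component $\Gamma_j$ of the connection. Since the three blocks of equations decouple according to which field is varied, I would treat them separately. For $(E\psi)$, I would fix $\Gamma$ and replace $\psi$ by $\psi + \eps\, \varphi$ with $\varphi \in \mathcal{C}^\infty_c(\R^4,\C^N)$; since $\mathcal{L}_1$ does not depend on $\psi$, only $\mathcal{L}_0$ and $W$ contribute. Using that $D_j(\psi+\eps\varphi) = D_j\psi + \eps\, \partial_j\varphi + \eps q\, \Gamma_j\varphi = D_j\psi + \eps\, D_j\varphi$ (the covariant derivative is affine in its argument with the same connection term), the variation of $\mathcal{L}_0$ produces terms $\Re\langle D_0\psi, D_0\varphi\rangle - \sum_{j=1}^3 \Re\langle D_j\psi, D_j\varphi\rangle$; one then integrates by parts, moving each $D_k$ onto the other factor. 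The key computational point is that $\int \Re\langle D_k\psi, D_k\varphi\rangle = -\int \Re\langle D_k^2\psi, \varphi\rangle$, i.e. the covariant derivative is skew-adjoint for this pairing up to a total divergence — this follows because $\Gamma_j \in \lie \subset \ualg{N}$ is skew-Hermitian, so $\langle \Gamma_j\psi, \varphi\rangle + \langle \psi, \Gamma_j\varphi\rangle = 0$, and the ordinary $\partial_j$ integrates by parts with the usual sign. Combining with $\delta W = \Re\langle W'(\psi),\varphi\rangle$ from (\ref{4}) and the arbitrariness of $\varphi$ gives $(E\psi)$.

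For the gauge equations $(E\Gamma0)$ and $(E\Gamma j)$, I would fix $\psi$ and vary $\Gamma_k \mapsto \Gamma_k + \eps\, A_k$ with $A_k \in \mathcal{C}^\infty_c(\R^4,\lie)$, doing this one index $k$ at a time so that the stated ``for all $A\in\lie$'' form emerges pointwise. Two pieces respond: $\mathcal{L}_0$ through the coupling term $q\,\Gamma_k\psi$ inside $D_k\psi$, and $\mathcal{L}_1$ through the curvature components $F_{kj}$. The variation of $\mathcal{L}_0$ is straightforward: $\delta D_k\psi = \eps q\, A_k\psi$, so the contribution is $\pm q\,\Re\langle D_k\psi, A_k\psi\rangle$ with the sign dictated by (\ref{come-1-forme}) (minus for $k=0$, plus for $k\ge 1$ in $\mathcal{L}_0$, but note $\mathcal{L}_0 = \frac12|D_0\psi|^2 - \frac12\sum|D_j\psi|^2$, so varying $\Gamma_0$ gives a $+$, varying $\Gamma_j$ gives a $-$; these combine with the signs from $\mathcal{L}_1$ to produce the displayed equations). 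Using the cyclicity of the trace, $\Re\langle D_k\psi, A_k\psi\rangle = \Re\, \mathrm{Tr}\big((D_k\psi\cdot\psi^*)^* A_k\big) = \Re\langle A_k, D_k\psi\cdot\psi^*\rangle$ up to the Hermitian-conjugate subtlety, which is exactly the $q\, D_k\psi\cdot\psi^*$ term.

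The variation of $\mathcal{L}_1 = \frac12\sum_{j=1}^3\|F_{0j}\|^2 - \frac14\sum_{k,j=1}^3\|F_{kj}\|^2$ is the more delicate part. From (\ref{curvature}), $F_{kj} = \partial_k\Gamma_j - \partial_j\Gamma_k + q[\Gamma_k,\Gamma_j]$, so under $\Gamma_m \mapsto \Gamma_m + \eps A_m$ one gets $\delta F_{kj} = (\partial_k A_j - \partial_j A_k)\,(\text{when } m\in\{j,k\}) + q([A_k,\Gamma_j] + [\Gamma_k,A_j])$, which one recognizes as $D_k A_j - D_j A_k$ using the covariant derivative (\ref{der-cov-gauge}) on $\lie$-valued functions. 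Then $\delta \|F_{kj}\|^2 = 2\Re\langle F_{kj}, D_k A_j - D_j A_k\rangle$, and I would integrate by parts, using again that $D_k$ is skew-adjoint with respect to $\langle\cdot,\cdot\rangle = \mathrm{Tr}(\,\cdot^*\,\cdot)$ on $\lie$ (because $\mathrm{ad}_{\Gamma_k}$ is skew for the trace form when $\Gamma_k$ is skew-Hermitian, plus the divergence from $\partial_k$). This moves $D_k$ onto $F_{kj}$, yielding terms like $\Re\langle A_j, D_k F_{kj}\rangle$. Collecting, for the variation $A_0$ (i.e. $m=0$) only the $\|F_{0j}\|^2$ terms respond and one obtains $\sum_{j=1}^3 D_j F_{0j}$; for $A_j$ with $j\ge1$ one picks up $D_0 F_{0j}$ from the first sum and $-\sum_{\ell\ne j} D_\ell F_{\ell j}$ from the second, using the antisymmetry $F_{\ell j} = -F_{j\ell}$. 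I expect the bookkeeping of signs and indices in this last step — tracking how $D_k A_j - D_j A_k$ distributes over the double sum, reindexing, and reconciling the factors $\frac12$ versus $\frac14$ against the antisymmetry of $F$ — to be the main obstacle; everything else is a routine integration by parts once the skew-adjointness of covariant derivatives under the trace pairing is established.
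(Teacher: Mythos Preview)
Your proposal is correct and follows essentially the same approach as the paper: vary $\psi$ and each $\Gamma_k$ separately, use that $\Gamma_j\in\lie$ is anti-Hermitian to integrate the covariant derivatives by parts, and use trace cyclicity ($Tr(ABC)=Tr(BCA)$) to rewrite the matter coupling as $\Re\langle A_k, q\,D_k\psi\cdot\psi^*\rangle$. Your packaging of the curvature variation as $\delta F_{kj}=D_kA_j-D_jA_k$ is slightly more conceptual than the paper's direct expansion, but the computation is the same.
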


\begin{proof}
To write the Euler-Lagrange equations relative to (\ref{completa}) we first consider the variation with respect to $\psi$, which gives
$$
d\mathcal{S}\left( \psi ,\Gamma \right) \left[ \left( \varphi ,0\right) \right] = - \Re \int < 
d_{\Gamma} \psi , d_{\Gamma }\varphi >_{M} \, dxdt = 
$$
$$
\Re \int \left((D_{0}\psi ,D_{0}\varphi)_{\C^{N}} - \sum_{j=1}^{3} ( D_{j}\psi ,D_{j}\varphi )_{\C^{N}} - ( W'(\psi), \varphi )_{\C^{N}} \right) \, dxdt = 
$$
$$
= \Re \int  (-D_{0}^{2}\psi + \sum_{j=1}^{3}D_{j}^{2}\psi - W'(\psi) ,\varphi )_{\C^{N}} \, dxdt
$$
where we have used the fact that matrices in $\lie$ are anti-Hermitian. We now consider the variations with respect to $\Gamma$ separately for each $\Gamma_{j}$ with $j=0,1,2,3$. For $j=0$ we have
$$
d\mathcal{S}\left( \psi ,\Gamma \right) \left[ \left( 0, B_{0} \right) \right] = \Re \int \left( ( qB_{0} \psi, D_{0} \psi )_{\C^{N}} - \sum_{j=1}^{3}\, \left\langle \partial_{j}B_{0} + q [\Gamma_{j},B_{0}], F_{0j} \right\rangle \right)\, dxdt=
$$
$$
= \Re \int \left\langle B_{0}, q\, D_{0}\psi \cdot \psi^{*} + \sum_{j=1}^{3} D_{j} F_{0j} \right\rangle\, dxdt
$$
where we have also used $Tr(ABC) = Tr(BCA)$ for any triple of square matrices $A,B,C$. The cases $j=1,2,3$ are similar with
$$
d\mathcal{S}\left( \psi ,\Gamma \right) \left[ \left( 0, B_{j} \right) \right] = 
$$
$$
= \Re \int \left( -( qB_{j} \psi, D_{j} \psi )_{\C^{N}} + \left\langle \partial_{0}B_{j} +q[\Gamma_{0},B_{j}] , F_{0j} \right\rangle - \frac 12 \sum_{\ell \not= j} \left\langle \partial_{\ell}B_{j} + q [\Gamma_{\ell},B_{j}], F_{\ell j} - F_{j \ell} \right\rangle \right)\, dxdt=
$$
$$
= \Re \int \left\langle B_{j}, - q\, D_{j}\psi \cdot \psi^{*} - D_{0} F_{0j} + \sum_{\ell \not= j}  D_{\ell} F_{\ell j} \right\rangle\, dxdt
$$
\end{proof}

Of fundamental importance in Lagrangian dynamics is the existence of conservation laws, obtained for example by Noether's Theorem under the existence of group actions which leave invariant the Lagrangian density. The Lagrangian density (\ref{completa}) is invariant for the action of the Poincar\'e group of Lorentz transformations of the Minkowski space $\R^{4}$, and for \textit{gauge transformations} of the fields $(\psi, \Gamma)$. A gauge transformation $\gau$ defined by a section 
$$
\R^{4} \ni x \mapsto \gau(x) \in G
$$ 
acts as 
\begin{equation} \label{g-act-mf}
(\gau\, \psi)(x) = \gau(x)\, \psi(x) \in \C^N
\end{equation}
on the matter field $\psi$, and by
\begin{equation} \label{g-act-gf}
\gau\, \Gamma = \sum_{j=0}^3 \, \tilde \Gamma_j \, dx^j \in \Lambda^1(\R^4,\lie)
\end{equation}
on the gauge field $\Gamma$, where
\begin{equation} \label{g-act-singgf}
\tilde \Gamma_j (x) = \gau(x)\, \Gamma_j(x) \, \gau^{-1}(x) - \frac 1q \left( \partial_j \gau(x) \right) \gau^{-1}(x) \in \lie
\end{equation}
By Noether's Theorem, if a Lagrangian density $\LL(t,x,u,\partial_t u, \nabla u)$ is invariant for the action of a one-parameter group $H=\set{h_{\lambda}}$, $\lambda \in \R$, then the function
\begin{equation} \label{int-primo}
I = \int_{\R^3}\,\, \left[ \frac{\partial \LL}{\partial (\partial_t u)}\, \left( \frac{\partial (h_{\lambda}(u))}{\partial \lambda} - \partial_t u \, \frac{\partial (h_{\lambda}(t))}{\partial \lambda} - \sum_{j=1}^3 \partial_j u\, \frac{\partial (h_{\lambda}(x_j))}{\partial \lambda} \right) + \LL\, \frac{\partial (h_{\lambda}(t))}{\partial \lambda} \right]_{\lambda = 0}\ dx
\end{equation}
is an integral of motion if all the fields fall off at infinity ``sufficiently rapid" (see e.g. \cite{gelf-fom}). The integral of motions obtained from the invariance under Lorentz transformations are energy, momentum, angular momentum and velocity of the ergocenter. The invariance under gauge transformations yields instead the integrals of motions known as \textit{(Noether's) charges}. 

\subsection{The Abelian case $G=U(1)$} \label{sec:ab-case}

We briefly consider the case $N=1$ and $G=U(1)$, obtaining Klein-Gordon-Maxwell equations as a simple case of the framework developed in the previous section. In this case the real Lie algebra $\lie = \ualg{1}= \imath \R$. We write the connection form $\Gamma \in \Lambda^{1}(\R^{4},\lie)$ as
$$
\Gamma = \sum_{j=0}^{3}\, \Gamma_{j}\, dx^{j} = - \imath \varphi\, dx^{0} + \imath A_{1}\, dx^{1} + \imath A_{2}\, dx^{2} + \imath A_{3}\, dx^{3}
$$
and with abuse of notation, we introduce also the notation of $\Gamma$ as a four-vector with components
\begin{equation} \label{gamma-4-vec-abel}
\Gamma = (-\varphi, \A) \quad \text{where} \quad \A=( A_{1}, A_{2}, A_{3})
\end{equation}
We also introduce the notation for the partial derivatives in the space-time components in $\R^{4}$, as $\partial_{0} = - \partial_{t}$,   $\partial_{1} = \partial_{x}$, $\partial_{2} = \partial_{y}$, $\partial_{3} = \partial_{z}$, with $\nabla \psi = (\partial_{x} \psi, \partial_{y} \psi, \partial_{z} \psi)$. The covariant derivatives then take the form
\begin{equation} \label{deriv-cov-max}
D_{0} \psi := \left( - \partial_t - \imath q \varphi \right) \psi \qquad D_{j} \psi :=  \left( \partial_{j} + \imath q A_{j} \right) \psi, \quad  j=1,2,3
\end{equation}
Using this notation we rewrite $\LL_{0}(\psi, \Gamma)$ of (\ref{elle0}) and $\LL_{1}(\Gamma)$ of (\ref{elle1}) as follows. For $\LL_{0}$ we get
\begin{equation} \label{elle-0-p-abel}
\LL_{0}(\psi, \varphi, \A) = \frac 1 2\, \left| \partial_{t} \psi + \imath\, q\, \varphi \psi \right|^{2} - \frac 1 2\, \left| \nabla \psi + \imath\, q\, \A \psi \right|^{2}
\end{equation}
In this case the components $F_{kj}$ defined in (\ref{curvature}) are complex numbers given by
$$
F_{0j} = - \imath \partial_{t} A_{j} + \imath \partial_{j} \varphi \qquad j=1,2,3
$$
$$
F_{kj} = \imath \partial_{k} A_{j} - \imath \partial_{j} A_{k} \qquad k,j=1,2,3
$$
for which $\| F_{kj} \|^{2} = |F_{kj}|^{2}$. It follows that
$$
\sum_{j=1}^{3}\, \| F_{0j}\|^{2} = \sum_{j=1}^{3}\, \left( \partial_{t} A_{j} - \partial_{j} \varphi \right)^{2} = \left| \partial_{t} \A - \nabla \varphi \right|^{2}
$$
$$
\sum_{k,j=1}^{3}\, \|F_{ij}\|^{2} = 2\ \left( |F_{12}|^{2} + |F_{23}|^{2} + |F_{31}|^{2} \right)= 2\ \left| \nabla \times \A \right|^{2}
$$
and
\begin{equation} \label{elle-1-p-abel}
\LL_{1}(\varphi, \A) =  \frac 1 2\,  \left| \partial_{t} \A - \nabla \varphi \right|^{2} - \frac 1 2\, \left| \nabla \times \A \right|^{2}
\end{equation}

To obtain the Klein-Gordon-Maxwell system of equations one can make the variations of ${\mathcal S}$ with respect to $\psi$, $\varphi$ and $\A$, or simply substitute the expressions for the covariant derivatives and the components $F_{kj}$ into the system $(E\psi),(E\Gamma_0),(E\Gamma_j)$ by taking into account that for $\lie = \imath \R$ these equations become 
$$
\left\{
\begin{array}{l}
D_{0}^{2}\psi -\sum_{j=1}^{3}D_{j}^{2}\psi + W'(\psi) = 0  \\[0.2cm]
\Im \left( \sum_{j=1}^{3}D_{j} F_{0j} + q\, D_{0}\psi \cdot \psi^{*} \right) =0 \\[0.2cm]
\Im \left( D_{0} F_{0j} - \sum_{\ell\not = j} D_{\ell} F_{\ell j} + q\, D_{j}\psi \cdot \psi^{*} \right) =0
\end{array}
\right.
$$
Hence we obtain
\begin{eqnarray}
& D^{2}_{0}\, \psi - \sum_{j=1}^{3}\, D^{2}_{j}\, \psi + W'(\psi)=0 \label{prima-kgm} \\[0.2cm]
& \nabla \cdot \left( \partial_{t} \A - \nabla \varphi \right) + q\, \Re(\imath\, \psi\, \partial_t \bar \psi) + q^2\, |\psi|^2 \, \varphi = 0  \label{seconda-kgm} \\[0.2cm]
& \partial_{t}  \left( \partial_{t} \A - \nabla \varphi \right)  + \nabla \times (\nabla \times \A) + q\, \Re (\imath\, \psi \, \nabla \bar \psi) + q^2\, |\psi |^2 \A= 0  \label{terza-kgm}
\end{eqnarray}

A useful approach to equations (\ref{prima-kgm})-(\ref{terza-kgm}) is to look for solutions $\psi(t,x) \in \C$ written in polar form, that is
\begin{equation} \label{polar-1}
\psi(t,x) = u(t,x) \, e^{\imath \, S(t,x)}, \qquad u\in \R^{+}, \ S\in \R/2\pi \Z
\end{equation}
Using notation (\ref{polar-1}), equation (\ref{prima-kgm}) splits in the equations
\begin{eqnarray}
& \partial_{t}^{2} u -\triangle u + \left[ |\nabla S + q\A|^{2} - \left( \partial_{t} S +q\varphi \right)^{2} \right] \, u + f'(u) =0 \label{prima-kgm-p1} \\[0.2cm]
& \partial_{t} \left[ \left( \partial_{t}S +q\varphi \right) u^{2} \right] - \nabla \cdot \left[ \left( \nabla S + q\A \right) u^{2} \right] =0 \label{prima-kgm-p2}
\end{eqnarray}
and (\ref{seconda-kgm}) and (\ref{terza-kgm}) become
\begin{eqnarray}
& \nabla \cdot \left( \partial_{t} \A - \nabla \varphi \right) + q\, (\partial_{t} S + q\varphi) \, u^{2}=0 \label{seconda-kgm-p} \\[0.2cm]
& \partial_{t}  \left( \partial_{t} \A - \nabla \varphi \right)  + \nabla \times (\nabla \times \A) + q\, (\nabla S + q \A)\, u^{2} =0  \label{terza-kgm-p}
\end{eqnarray}
Letting 
$$
\rho = \left( \partial_{t}S +q\varphi \right) u^{2} \qquad \mathbf{j} = - \left( \nabla S + q\A \right) u^{2}
$$
equation (\ref{prima-kgm-p2}) is the continuity equation for the electric charge density $\rho$. 

Finally we recall that the Lagrangian $\LL(\psi, \varphi, \A) = \LL_{0}(\psi, \varphi, \A) + \LL_{1}(\varphi, \A) - W(\psi)$ is invariant under the action of the Poincar\'e group and of the gauge transformation (\ref{g-act-mf}), (\ref{g-act-gf}), which in this case are of the form $\gau(x) = e^{\imath \, t(x)}$, $t(x) \in \R$, and transformation (\ref{g-act-singgf}) becomes
$$
\tilde \Gamma_j (x) = \Gamma_j -\frac \imath q \, \partial_j t(x)  
$$
The integrals of motion obtained by Noether's Theorem are energy, which using (\ref{polar-1}) takes the form
\begin{equation} \label{ener-kgm-p}
{\mathcal E} = \frac 12\ \int_{\R^3} \, \left[ (\partial_{t} u)^{2} + |\nabla u|^{2} + \frac{\rho^{2} + |{\mathbf j}|^{2}}{u^{2}} + 2W(u) + |\partial_{t} \A - \nabla \varphi|^{2} + |\nabla \times \A|^{2}  \right]\, dx 
\end{equation}
momentum, angular momentum which using (\ref{polar-1}) takes the form
\begin{equation} \label{angmom-kgm-p}
{\mathbf {\mathcal M}} = \int_{\R^{3}}\, {\mathbf x} \times \left[ \partial_{t}u \, \nabla u - \frac{\rho\, {\mathbf j}}{u^{2}} + \left( \partial_{t} \A + \nabla \varphi \right) \times (\nabla \times \A)   \right]\, dx
\end{equation}
velocity of the ergocenter and the Noether's charge, which is
\begin{equation} \label{charge-kgm-p}
{\mathcal Q} = \int_{\R^3}\, \rho \, dx = \int\, \left( \partial_{t}S +q\varphi \right) u^{2}\, dx
\end{equation}
and corresponds to the electric charge.

The existence of soliton and vortices solutions to equations (\ref{prima-kgm-p1})-(\ref{terza-kgm-p}) has been proved in \cite{bf,bf-vort} using the following ansatz
\begin{equation} \label{vortices-kgm}
\psi(t,x) = u(x)\, e^{\imath(\ell \vartheta(x)- \omega  t)}\ \ \text{and}\ \ (\varphi, \A) = (\varphi(x),\A(x))
\end{equation}
where $u\ge 0$, $\omega \in \R$, $\ell \in \Z$ and $\vartheta(x)$ is the longitude of the vector $\vec{r}=(x,y,z)$ in radial coordinates in $\R^3$. These solutions have non-vanishing matter angular momentum (see \cite{bf-vort})
\begin{equation} \label{mangmom-kgm-p}
{\mathbf {\mathcal M}}_{m} : = \int_{\R^{3}}\, {\mathbf x} \times \left[ \partial_{t}u \, \nabla u - \frac{\rho\, {\mathbf j}}{u^{2}} \right]\, dx = \int_{\R^{3}}\, u^{2}\, (\partial_{t}S + q\varphi)\, \left[ {\mathbf x} \times \left( \nabla S + q\A \right) \right]\, dx
\end{equation}
and hence are called vortices, when $\ell \not= 0$. 

Let $\hat{H}^{1}(\R^{3})$ denote the weighted Sobolev space of functions with norm
$$\| u \|^{2}_{\hat{H}^{1}} := \int\, \left[ |\nabla u|^{2} + \left( 1 + \frac{\ell^{2}}{r^{2}} \right)\, u^{2} \right]\, dx$$
for $\ell \in \Z$, and ${\mathcal D}^{1,2}$ denote the completion of $C^{\infty}_{0}(\R^{3})$ with respect to the inner product
$$(v|w)_{{\mathcal D}^{1,2}}:= \int\, \nabla v\cdot \nabla w\, dx$$
Then the following theorem holds
\begin{theorem}[\cite{bf-vort}] \label{main-bf}
Let $W$ satisfy (W1)-(W4). Then for all $\ell \in \Z$ there exists $q_0 >0$ such that for every $q \in (0,q_0)$ the system (\ref{prima-kgm-p1})-(\ref{terza-kgm-p}) admits a finite energy solution $(u,\omega,\varphi,\A)$ in the sense of distributions with: $u=u(\sqrt{x^2+y^2},z)\not\equiv 0$; $\omega >0$; $\varphi=\varphi(\sqrt{x^2+y^2},z)\not\equiv 0$; $\A = a(\sqrt{x^2+y^2},z)\, \nabla \vartheta$. In particular $u\in \hat{H}^{1}(\R^{3})$, $\varphi \in {\mathcal D}^{1,2}$ and $\A \in ({\mathcal D}^{1,2})^{3}$. Moreover, $\A \equiv 0$ if and only if $\ell = 0$.
\end{theorem}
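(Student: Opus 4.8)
The plan is the variational reduction of Benci--Fortunato. \textbf{Step 1 (reduction by the vortex ansatz).} Insert $\psi(t,x)=u(x)\,e^{\imath(\ell\vartheta(x)-\omega t)}$, $\varphi=\varphi(x)$, $\A=a(x)\,\nabla\vartheta$ into (\ref{prima-kgm-p1})--(\ref{terza-kgm-p}), with $u,\varphi,a$ cylindrically symmetric, i.e.\ functions of $r=\sqrt{x^{2}+y^{2}}$ and $z$. One first checks that the continuity equation (\ref{prima-kgm-p2}) holds automatically for such fields (the transported current $(\ell+qa)u^{2}\nabla\vartheta$ is divergence free, since $\triangle\vartheta=0$ and $\nabla[(\ell+qa)u^{2}]\perp\nabla\vartheta$), while the remaining equations reduce to
\begin{equation*}
-\triangle u+\Bigl[\tfrac{(\ell+qa)^{2}}{r^{2}}-(\omega-q\varphi)^{2}\Bigr]u+f'(u)=0,\qquad -\triangle\varphi+q^{2}u^{2}\varphi=q\omega u^{2},\qquad -\triangle a+\tfrac{2}{r}\partial_{r}a+q(\ell+qa)u^{2}=0 .
\end{equation*}
From (\ref{ener-kgm-p}) the associated energy is $\mathcal E=\tfrac12\int_{\R^{3}}\bigl[|\nabla u|^{2}+(\omega-q\varphi)^{2}u^{2}+\tfrac{(\ell+qa)^{2}}{r^{2}}u^{2}+2W(u)+|\nabla\varphi|^{2}+\tfrac{|\nabla a|^{2}}{r^{2}}\bigr]\,dx$.

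\textbf{Step 2 (elimination of $\varphi$ and $a$; the reduced functional).} For fixed cylindrically symmetric $u\in\hat H^{1}(\R^{3})$ and $\omega\in\R$ the equations for $\varphi$ and $a$ are linear and coercive, so by Lax--Milgram they have unique solutions $\varphi_{u,\omega}=\omega\,\phi_{u}\in\mathcal D^{1,2}$ (linear in $\omega$) and $a_{u}$ with $a_{u}\nabla\vartheta\in(\mathcal D^{1,2})^{3}$; the maximum principle gives the pointwise bounds $0\le q\phi_{u}\le 1$ and (say for $\ell>0$) $0\le\ell+qa_{u}\le\ell$, whence $(\omega-q\varphi_{u,\omega})^{2}\le\omega^{2}$, $(\ell+qa_{u})^{2}\le\ell^{2}$, and, for $q$ small, $\ell+qa_{u}\ge\ell/2$, so the centrifugal term is comparable to and controlled by $\|u\|_{\hat H^{1}}$. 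Since $(\varphi,a)\mapsto\mathcal E$ is strictly convex with minimisers $\varphi_{u,\omega},a_{u}$, we substitute them back and impose a charge constraint $|\mathcal Q|=\sigma>0$; linearity of $\phi_{u}$ then forces $\omega=\sigma/\mu(u)$ with $\mu(u):=\int_{\R^{3}}(1-q\phi_{u})u^{2}>0$, and the problem becomes that of finding nontrivial critical points on $\hat H^{1}(\R^{3})\setminus\{0\}$ of
\begin{equation*}
J_{\sigma}(u)=\frac12\int_{\R^{3}}\bigl[|\nabla u|^{2}+2W(u)\bigr]\,dx+\frac{\sigma^{2}}{2\mu(u)}+\frac{\ell}{2}\int_{\R^{3}}\frac{(\ell+qa_{u})\,u^{2}}{r^{2}}\,dx ,
\end{equation*}
the last term equalling the magnetic energy $\tfrac12\int\tfrac{(\ell+qa_{u})^{2}u^{2}}{r^{2}}+\tfrac12\int\tfrac{|\nabla a_{u}|^{2}}{r^{2}}$ by virtue of the $a$-equation tested with $a_{u}$. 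Because $\phi_{u},a_{u}$ are critical for the $\varphi,a$-variations, their implicit variations drop out of $\delta_{u}J_{\sigma}$, and one recovers exactly the $u$-equation above; the sign change from $+(\omega-q\varphi)^{2}$ in $\mathcal E$ to $-(\omega-q\varphi)^{2}$ in the $u$-equation is produced by differentiating $\sigma^{2}/(2\mu(u))$, since $\delta_{u}\mu=2(1-q\phi_{u})^{2}u$ and $\sigma^{2}/\mu(u)^{2}=\omega^{2}$.

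\textbf{Step 3 (minimisation and conclusion).} The functional $J_{\sigma}$ is coercive on $\hat H^{1}$ (the behaviour $W(s)\sim\tfrac{m^{2}}{2}s^{2}$ near $0$ from (W2) supplies the $L^{2}$-part of the norm, the magnetic term the $\ell^{2}/r^{2}$-weighted part for $q$ small, and $\sigma^{2}/\mu(u)\to\infty$ as $\|u\|_{L^{2}}\to 0$), and weakly lower semicontinuous once one checks $\mu(u_{n})\to\mu(u)$ and $\int\tfrac{(\ell+qa_{u_{n}})u_{n}^{2}}{r^{2}}\to\int\tfrac{(\ell+qa_{u})u^{2}}{r^{2}}$ along weakly convergent subsequences, which follows from local strong $L^{2}$-convergence and the continuity of $u\mapsto(\phi_{u},a_{u})$. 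Since $J_{\sigma}$ is invariant under translations in $z$, a minimising sequence need not converge: compactness is recovered by a concentration--compactness argument, translating in $z$ to prevent runaway, excluding vanishing by the lower bound on $\mu(u_{n})$ (hence on $\int u_{n}^{2}$) forced by $\sigma^{2}/\mu(u_{n})$ staying bounded, and excluding dichotomy by a strict sub-additivity inequality $d(\sigma)<d(\sigma_{1})+d(\sigma_{2})$, $\sigma_{1}+\sigma_{2}=\sigma$. Hypotheses (W3)--(W4) and the smallness of $q$ enter precisely here: using a plateau-type test function of height $s_{0}$ and the inequality $f(s_{0})<\tfrac{m^{2}}{2}s_{0}^{2}$, and treating the electromagnetic coupling as an $O(q^{2})$ perturbation of the nonlinear Klein--Gordon (vortex) problem, one gets $d(\sigma)$ strictly below the threshold value $m\sigma$ at which localised states cease to exist, for a suitable $\sigma>0$. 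The minimiser $u$ is then nontrivial, $\omega=\sigma/\mu(u)>0$, $\varphi=\omega\phi_{u}\not\equiv 0$ (it solves an equation with nonzero source), $\A=a_{u}\nabla\vartheta$, with $u\in\hat H^{1}$, $\varphi\in\mathcal D^{1,2}$, $\A\in(\mathcal D^{1,2})^{3}$; and $\A\equiv 0$ iff $\ell=0$, since for $\ell=0$ the $a$-equation has only the trivial solution, whereas for $\ell\neq 0$ the decay $a_{u}\to 0$ at infinity forbids $\ell+qa_{u}\equiv 0$.

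I expect the compactness step to be the main obstacle: for $\ell\neq 0$ the space $\hat H^{1}$ of cylindrically symmetric functions does not embed compactly into the relevant $L^{p}$ spaces because of $z$-translation invariance, so one is forced into a concentration--compactness scheme and, in particular, into proving the strict sub-additivity of $\sigma\mapsto d(\sigma)$. Secondary technical points are the absence of an upper growth condition on $W$ — handled by an a priori $L^{\infty}$ or truncation bound on the minimiser before writing its Euler--Lagrange equation in the distributional sense — and the careful justification of the reduced variational principle, given the nonlocal dependence of $\varphi_{u}$ and $a_{u}$ on $u$.
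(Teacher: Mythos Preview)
The paper does not prove this theorem: it is stated as a citation from \cite{bf-vort} (Benci--Fortunato, \emph{Spinning Q-balls for the Klein--Gordon--Maxwell equations}, Commun.\ Math.\ Phys.\ \textbf{295} (2010)) and used as a black box in the proof of Theorem~\ref{main-su2}. There is therefore no ``paper's own proof'' to compare against.

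That said, your outline is a faithful sketch of the variational scheme actually carried out in \cite{bf-vort}: the vortex ansatz reduction, the Lax--Milgram elimination of $(\varphi,a)$ with the maximum-principle bounds $0\le q\phi_u\le 1$ and $|\ell+qa_u|\le|\ell|$, the reduced energy--charge functional $J_\sigma$, and the concentration--compactness argument modulo $z$-translations are all the right ingredients. Your identification of strict sub-additivity of $\sigma\mapsto d(\sigma)$ as the crux, and of (W3)--(W4) plus smallness of $q$ as what makes the test-function estimate go through, matches the structure of the original proof. The only caveat is that several of your steps are stated rather than verified (e.g.\ the precise form of $\delta_u\mu$, the weak continuity of $u\mapsto(\phi_u,a_u)$, and the $L^\infty$/truncation handling of $W$ without upper growth), so as written this is an accurate roadmap rather than a complete proof; for the present paper's purposes, however, only the statement is needed.
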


We remark that a nonlinear term $W$ satisfying (W1)-(W4) is not of Higgs-type. This choice is justified by the following theorem proved in \cite{bf-arx}, whose proof we repeat here for completeness.

\begin{theorem}[\cite{bf-arx}] \label{no-higgs}
Let $W:\C\to \R$ be of the form $W(\psi) = f(|\psi|)$, where $f:\R^{+}\to \R$ is a $C^{2}$ function satisfying
\begin{itemize}
\item[(i)] $f(s)\ge 0$ for all $s\in \R^{+}$;
\item[(ii)] $f(0)>0$;
\item[(iii)] there exists $\bar s>0$ such that $f(\bar s)=0$.
\end{itemize}
Then system (\ref{prima-kgm-p1})-(\ref{terza-kgm-p}) admits no solutions of the form (\ref{vortices-kgm}) with $\omega, \ell\not= 0$ and finite energy.
\end{theorem}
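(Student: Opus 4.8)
The plan is to turn the incompatibility between the Higgs-type hypothesis $f(0)>0$ and finiteness of the energy (\ref{ener-kgm-p}) into a measure-theoretic contradiction, using essentially no information beyond the shape of the energy functional. First I would insert the ansatz (\ref{vortices-kgm}) into (\ref{ener-kgm-p}): since $u=u(x)$ is time independent one has $\partial_t u=0$, and using $\partial_t S=-\omega$, $\nabla S=\ell\nabla\vartheta$, $\partial_t\A=0$ one finds $\rho=(q\varphi-\omega)u^2$, ${\mathbf j}=-(\ell\nabla\vartheta+q\A)u^2$, so that
\[
{\mathcal E}=\frac12\int_{\R^3}\Big[\,|\nabla u|^2+(\omega-q\varphi)^2u^2+|\ell\nabla\vartheta+q\A|^2u^2+2f(u)+|\nabla\varphi|^2+|\nabla\times\A|^2\,\Big]\,dx .
\]
Finite energy forces each of these six non-negative integrands to lie in $L^1(\R^3)$; only three of them will be used: $\int_{\R^3}f(u)\,dx<\infty$, $\int_{\R^3}(\omega-q\varphi)^2u^2\,dx<\infty$, and $\nabla\varphi\in L^2(\R^3)$. (Note that $\ell\neq0$ plays no role in what follows.)

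Next, $f\in C^2$ and $f(0)>0$ give constants $\delta,c>0$ with $f(s)\ge c$ for $0\le s\le\delta$; since $u\ge0$ this yields $c\,|\{u\le\delta\}|\le\int_{\{u\le\delta\}}f(u)\,dx\le\int_{\R^3}f(u)\,dx<\infty$, so the set $A:=\{x\in\R^3:u(x)\le\delta\}$ has finite Lebesgue measure (in particular $u\not\equiv0$, since otherwise $f(u)\equiv f(0)>0$). Informally, $f(0)>0$ prevents the matter field from decaying: $u$ must stay bounded away from $0$ outside a set of finite measure, which is exactly what will clash with the remaining energy terms.

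Finally, a finite-energy configuration has $\varphi\in{\mathcal D}^{1,2}(\R^3)$ — the additive constant left free by $\nabla\varphi\in L^2(\R^3)$ being normalised to zero — hence $\varphi\in L^6(\R^3)$ by Sobolev embedding; therefore, for $\omega\neq0$, the set $B:=\{x\in\R^3:q|\varphi(x)|\ge|\omega|/2\}$ satisfies $(|\omega|/2)^6|B|\le q^6\int_{\R^3}|\varphi|^6\,dx<\infty$, so $|B|<\infty$. Then $\R^3\setminus(A\cup B)$ has infinite measure, while on it $u>\delta$ and $(\omega-q\varphi)^2\ge\omega^2/4$, so
\[
\int_{\R^3}(\omega-q\varphi)^2u^2\,dx\ \ge\ \frac{\omega^2\delta^2}{4}\,\big|\R^3\setminus(A\cup B)\big|=+\infty,
\]
contradicting ${\mathcal E}<\infty$; hence no finite-energy solution of the form (\ref{vortices-kgm}) with $\omega\neq0$ can exist.

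I expect the last step — controlling the scalar potential $\varphi$ at spatial infinity — to be the only genuine obstacle, since it is precisely there that the meaning of ``finite energy solution'' (membership in ${\mathcal D}^{1,2}(\R^3)$, rather than merely $\nabla\varphi\in L^2$) enters. The one degenerate case is $\varphi\to\omega/q$ at infinity, i.e. $\varphi=\tfrac{\omega}{q}+\tilde\varphi$ with $\tilde\varphi\in{\mathcal D}^{1,2}(\R^3)$: then (\ref{seconda-kgm-p}) reads $-\triangle\tilde\varphi+q^2u^2\tilde\varphi=0$, and testing with $\tilde\varphi$ — legitimate since in this case $\int_{\R^3}u^2\tilde\varphi^2\,dx=q^{-2}\int_{\R^3}(\omega-q\varphi)^2u^2\,dx<\infty$ — gives $\int_{\R^3}|\nabla\tilde\varphi|^2+q^2\int_{\R^3}u^2\tilde\varphi^2=0$, whence $\tilde\varphi\equiv0$ and $\varphi\equiv\omega/q$ is a non-zero constant, incompatible with $\varphi\in{\mathcal D}^{1,2}(\R^3)$ and forcing the charge ${\mathcal Q}=\int_{\R^3}(q\varphi-\omega)u^2\,dx$ to vanish, so this is not a solitary wave. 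The hypothesis $f(\bar s)=0$ only singles out the Higgs-type potentials and is not otherwise needed in the argument.
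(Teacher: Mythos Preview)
Your proof is correct and takes a genuinely different route from the paper's. The paper first uses hypothesis (iii) together with $\int f(u)\,dx<\infty$ to deduce the pointwise limit $u(x)\to\bar s$ as $|x|\to\infty$, and then runs the main contradiction through the vector potential and the hypothesis $\ell\neq0$: from $\int|\ell\nabla\vartheta+q\A|^2u^2\,dx<\infty$ and $u\to\bar s$ it argues that $|\ell\nabla\vartheta+q\A|$ is pointwise small for large $r,|z|$, hence $|\A|\gtrsim|\ell|/(qr)$ there, forcing $\A\notin L^6$ and contradicting the Sobolev inequality applied to $|\nabla\times\A|\in L^2$; the analogous argument for $\varphi$ and $\omega\neq0$ is mentioned in a single sentence. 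You instead work entirely with the scalar potential and $\omega\neq0$, and you replace both the pointwise limit $u\to\bar s$ and the pointwise smallness of $q\varphi-\omega$ by Chebyshev-type measure bounds: $|\{u\le\delta\}|<\infty$ from (ii) alone, and $|\{q|\varphi|\ge|\omega|/2\}|<\infty$ from $\varphi\in L^6$. This is more elementary, avoids the somewhat informal passage from $L^2$-integrability to pointwise decay in the paper, shows that hypothesis (iii) and the condition $\ell\neq0$ are not needed, and your explicit treatment of the degenerate normalisation $\varphi=\omega/q+\tilde\varphi$ makes visible an issue about the meaning of $\varphi\in\mathcal{D}^{1,2}$ that the paper leaves implicit. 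The paper's approach, on the other hand, yields an independent contradiction from $\ell\neq0$ via $\A$, so it covers the complementary case $\omega=0$, $\ell\neq0$ as well.
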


\begin{proof}
The energy $\mathcal E$  (\ref{ener-kgm-p}) on functions of the form \eqref{vortices-kgm} writes
\begin{equation} \label{en-ansatz}
{\mathcal E}(u,\ell,\omega,\varphi,\A) = \frac 12\ \int_{\R^3} \, \Big[ |\nabla u|^{2} + (q\varphi -\omega)^{2}\, u^{2} + |\ell \nabla \vartheta +q\A|^{2}\, u^{2} + 2W(u) + |\nabla \varphi|^{2} + |\nabla \times \A|^{2}  \Big]\, dx 
\end{equation}
which is a sum of non-negative terms. Hence if energy is finite, all single terms are finite too. Hence arguing by contradiction, if there exists a finite energy solution to (\ref{prima-kgm-p1})-(\ref{terza-kgm-p}) of the form \eqref{vortices-kgm}, then finiteness of the term $\int\, W(u)\, dx$ implies that
$$\lim_{|x|\to \infty}\, u(x) = \bar s$$
by assumptions (i)-(iii) on $W$. From this and finiteness of energy it follows that
$$\int_{\R^3} \, |\ell \nabla \vartheta +q\A|^{2} u^{2}\, dx  < \infty$$
which implies that for any $\eps>0$ there exist $R,K>0$ such that
$$|\ell \nabla \vartheta(r,z) +q\A(r,z)| < \eps\qquad \forall\, r>R,\, |z|>K$$
where $r:=\sqrt{x^2+y^2}$. Hence
$$|\A(r,z)| > \frac 1q \left( |\ell \nabla \vartheta(r,z)|-\eps\right) = \frac 1q \left( \frac{|\ell|}{r}-\eps\right) \qquad \forall\, r>R,\, |z|>K$$
It follows that for $\eps$ small enough
$$\int_{\R^{3}} |\A(x)|^{6}\, dx \ge \int_{R}^{\infty} \int_{|z|>K}\, \left[\frac 1q \left( \frac{|\ell|}{r}-\eps\right)\right]^{6}\, r dr dz = \infty$$
Therefore $\A \not\in (L^{6}(\R^{3}))^{3}$, and then by Sobolev inequality
$$\int_{\R^{3}}\, |\nabla \A|^{2}\, dx = \infty$$
which contradicts the finiteness of the energy \eqref{en-ansatz} in the case $\ell\not= 0$. The same argument works for $\varphi$, and contradicts the finiteness of the energy in the case $\omega\not= 0$.
\end{proof}

\section{The simplest non-Abelian case $G=SU(2)$} \label{sec:nonab-case}

We now study a gauge theory on $\R^4$ with the matter field $\psi : \R^4 \to \C^2$ under the action of the gauge group $SU(2)$, with $\lie = \su{2}$, the real Lie algebra generated by $\imath$ times the Pauli matrices
$$
\tau_{1}:= \imath \sigma_{x} = \matrdpd{0}{\imath}{\imath}{0} \qquad
\tau_{2}:= \imath \sigma_{y} = \matrdpd{0}{1}{-1}{0} \qquad
\tau_{3}:= \imath \sigma_{z} = \matrdpd{\imath}{0}{0}{-\imath}
$$
By the properties of compact Lie groups, the exponential map
$$\exp : \su{2} \to SU(2)$$
is surjective and for each $g\in SU(2)$ there exists a triple $S=(S_1,S_2,S_3) \in \R^3$ with $\sum_{i=1}^3 S_i^2 \le \pi^2$ such that
$$g = \exp(S_1 \tau_1 + S_2 \tau_2 + S_3 \tau_3)$$
and it is unique when $\sum_{i=1}^3 S_i^2 < \pi^2$.
Given $(S_1,S_2,S_3) \in \R^3$ we introduce the notation
\begin{equation} \label{fasi-su2}
\SU(t,x) := S_1(t,x)\, \tau_1 + S_2(t,x)\, \tau_2 + S_3(t,x)\, \tau_3\in \su{2}, \qquad |\SU|^2:= |S|^2 = \sum_{i=1}^3 S_i^2
\end{equation}
and the operations
\begin{eqnarray}
& \partial_j \SU := \partial_j S_1(t,x)\, \tau_1 + \partial_j S_2(t,x)\, \tau_2 + \partial_j S_3(t,x)\, \tau_3 \label{der-pf} \\[0.2cm]
& \SU \times \tilde \SU := (S \times \tilde S)_1\, \tau_1 + (S \times \tilde S)_2\, \tau_2+ (S \times \tilde S)_3\, \tau_3 = -\frac 12\, [\SU,\tilde \SU] \label{lb-pf} \\[0.2cm]
& \SU \cdot \tilde \SU := S_1 \tilde S_1 + S_2 \tilde S_2 + S_3 \tilde S_3 = \frac 12 \langle \SU, \tilde \SU \rangle \label{tr-pf} \\[0.2cm]
& \SU \, \tilde \SU = - \SU \cdot \tilde \SU - \SU \times \tilde \SU \label{pr-pf}
\end{eqnarray}
where $[\cdot,\cdot]$ is the standard Lie bracket and in the last equation on the left hand side we use the usual matrix product. Finally for the gauge fields with abuse of notation we write
\begin{equation} \label{gf-pf}
\Gamma_{j} := \gamma_{j,1}\, \tau_{1} + \gamma_{j,2}\, \tau_{2} + \gamma_{j,3}\, \tau_{3}, \qquad j=0,1,2,3
\end{equation}
as in (\ref{fasi-su2}), and extend to $\Gamma_{j}$ the operations (\ref{der-pf}) and (\ref{lb-pf}). We then introduce the polar form for matter fields
\begin{equation}\label{polar-2}
\psi(t,x) = u(t,x) \, e^{\SU(t,x)}\, \psi_0, \qquad u\in \R^+,\, |\SU(t,x)| \le \pi
\end{equation}
for a fixed vector $\psi_0 \in \C^2$, $|\psi_{0}|_{\C^{2}}=1$. We now write the Lagrangian density $\LL_0$ as a function of the polar variables $(u,\SU)$ and their derivatives with respect to $\partial_{t} = -\partial_{0}$ and $\nabla = (\partial_{1},\partial_{2},\partial_{3})$.

\begin{lemma} \label{lem:derivata-esp}
For all $\SU \in \su{2}$ with regular functions $S_i(t,x)$, it holds
\begin{equation} \label{form-esp}
\partial_{j} \exp(\SU) = C(\SU, \partial_{j}\SU)\, \exp(\SU)
\end{equation}
with
$$
C(\SU, \partial_{j}\SU) := \partial_j \SU + \frac 12 \, (1-\cos 2)\, (\partial_j \SU \times \SU) + \frac 12\, (2-\sin 2)\, ((\partial_j \SU \times \SU) \times \SU) \in \su{2}
$$
\end{lemma}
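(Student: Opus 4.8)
The plan is to reduce the identity to an elementary computation, using the special structure of $\su{2}$. By the product rule (\ref{pr-pf}) every $\SU\in\su{2}$ satisfies $\SU^2 = -\SU\cdot\SU-\SU\times\SU = -|\SU|^2\,I$, so splitting the power series $\exp(\SU)=\sum_{n\ge 0}\SU^n/n!$ into even and odd parts gives the closed form
$$\exp(\SU) = \cos|\SU|\, I + \frac{\sin|\SU|}{|\SU|}\,\SU .$$
The first step is to differentiate this with respect to $x^j$, using $\partial_j|\SU| = (\SU\cdot\partial_j\SU)/|\SU|$ (which follows from $|\SU|^2=\SU\cdot\SU$ and (\ref{tr-pf})). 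This writes $\partial_j\exp(\SU)$ as a linear combination of the three matrices $I$, $\SU$ and $\partial_j\SU$, with scalar coefficients that are explicit functions of $|\SU|$ and of $\SU\cdot\partial_j\SU$.

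The second step is to expand the right-hand side $C(\SU,\partial_j\SU)\,\exp(\SU)$ in the same basis. Since $C\in\su{2}$, multiplication by $\exp(\SU)$ is again governed by (\ref{pr-pf}): $C\exp(\SU)=\cos|\SU|\,C-\tfrac{\sin|\SU|}{|\SU|}(C\cdot\SU)\,I-\tfrac{\sin|\SU|}{|\SU|}(C\times\SU)$. Inserting the definition of $C$ and using the elementary identities $(\partial_j\SU\times\SU)\cdot\SU=0$ and $(\partial_j\SU\times\SU)\times\SU = (\SU\cdot\partial_j\SU)\,\SU-|\SU|^2\,\partial_j\SU$ (the expansion of the vector triple product), one rewrites $C\exp(\SU)$ once more as a combination of $I$, $\SU$ and $\partial_j\SU$; equating the coefficients with those found in the first step reduces the lemma to a few trigonometric identities in $|\SU|$, essentially $1-\cos 2\theta=2\sin^2\theta$ and $\sin 2\theta=2\sin\theta\cos\theta$.

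A cleaner alternative, which I would at least record as a cross-check, is to use the general formula for the differential of the matrix exponential, $\partial_j\exp(\SU)=\big(\sum_{n\ge 0}\tfrac{1}{(n+1)!}(\mathrm{ad}_\SU)^{n}(\partial_j\SU)\big)\exp(\SU)$. Here (\ref{lb-pf}) gives $\mathrm{ad}_\SU(\partial_j\SU)=[\SU,\partial_j\SU]=2(\partial_j\SU\times\SU)$, and the vector triple product identity yields the operator relation $(\mathrm{ad}_\SU)^{3}=-4|\SU|^2\,\mathrm{ad}_\SU$; hence only the directions $\partial_j\SU$, $\partial_j\SU\times\SU$ and $(\partial_j\SU\times\SU)\times\SU$ occur, and the three residual numerical series in $|\SU|^2$ sum to the cosine/sine coefficients in the definition of $C$. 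The main work is bookkeeping rather than ideas: one must track the $|\SU|$-dependent scalar factors and verify they collapse to exactly the stated coefficients. A minor technical point is the set $|\SU|=0$ (and, for the first method, the set where $\partial_j\SU$ is parallel to $\SU$), where some intermediate scalars carry $|\SU|$ in a denominator---there the series representation makes it transparent that $C=\partial_j\SU$ and the identity extends by continuity.
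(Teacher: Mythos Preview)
Your proposal is correct. What you record as the ``cleaner alternative''---summing $\sum_{n\ge 0}\tfrac{1}{(n+1)!}(\mathrm{ad}_{\SU})^{n}(\partial_j\SU)$ after computing the iterated brackets---is exactly the paper's proof; they carry out precisely that calculation and sum the resulting cosine and sine series.

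Your primary approach, differentiating the closed form $\exp(\SU)=\cos|\SU|\,I+\tfrac{\sin|\SU|}{|\SU|}\,\SU$ and matching coefficients of $I$, $\SU$, $\partial_j\SU$ on both sides, is a genuinely different route. It trades the general derivative-of-exponential formula for elementary trigonometric bookkeeping and exploits the low dimension of $\su{2}$ more directly; the series method, by contrast, would adapt to any Lie algebra in which $\mathrm{ad}_{\SU}$ satisfies a low-degree polynomial identity. One minor caveat: when you expand $C\exp(\SU)$ you will in fact also produce a $\partial_j\SU\times\SU$ component, and part of the verification is that its coefficient vanishes---this is automatic from the trigonometric identities you cite, but should be checked explicitly rather than assumed.

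One small point in your favour: your operator identity $(\mathrm{ad}_{\SU})^{3}=-4|\SU|^{2}\,\mathrm{ad}_{\SU}$ carries the factor $|\SU|^{2}$, which the paper's computation drops (it writes $-8(\partial_j\SU\times\SU)$ where $-8|\SU|^{2}(\partial_j\SU\times\SU)$ is meant). Your version is the correct general statement; the discrepancy is immaterial for the paper's later application, where $\SU=S(t,x)\tau_m$ so that $\partial_j\SU$ is parallel to $\SU$ and the cross-product terms in $C$ vanish anyway.
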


\begin{proof}
We use the formula (see e.g. \cite{matrix})
\begin{equation} \label{feyn}
\frac{d}{dt} \exp(A(t)) = \sum_{k=0}^\infty\ \frac{1}{(k+1)!}\, (\text{ad} \, A(t))^k (A'(t))\, \exp(A(t))
\end{equation}
where
$$(\text{ad} \, A) (B) := [A,B]$$
By applying (\ref{feyn}) with $A(t) = \SU$ and using (\ref{lb-pf}) to obtain
$$[\SU, \partial_j \SU] = 2 (\partial_j \SU \times \SU)$$
$$\left[ \SU, [\SU, \partial_{j} \SU] \right] = \left[ \SU, 2 (\partial_j \SU \times \SU) \right] = 4\, \left(  (\partial_j \SU \times \SU)  \times \SU \right)$$
$$\left[ \SU, \left[ \SU, [\SU, \partial_{j} \SU] \right] \right] = 8\, \left( \left(  (\partial_j \SU \times \SU)  \times \SU \right) \times \SU \right) = -8\, (\partial_j \SU \times \SU)$$
and in general
$$\frac{1}{(k+1)!}\, (\text{ad} \, \SU)^k (\partial_{j}\SU) = \left\{ \begin{array}{ll} (-1)^{n-1} \frac{2^{2n-1}}{(2n)!}\, (\partial_j \SU \times \SU) & k=2n-1 \ge 1\\[0.2cm] (-1)^{n-1} \frac{2^{2n}}{(2n+1)!}\, \left(  (\partial_j \SU \times \SU)  \times \SU \right) & k=2n \ge 2
\end{array} \right.$$
Hence by (\ref{feyn})
$$
\partial_j \exp(\SU) = (\partial_{j} \SU) \exp(\SU) + \left( \sum_{n=1}^{\infty}\, (-1)^{n-1} \frac{2^{2n-1}}{(2n)!}\right) (\partial_j \SU \times \SU) \exp(\SU) +$$ $$+ \left( \sum_{n=1}^{\infty}\, (-1)^{n-1} \frac{2^{2n}}{(2n+1)!} \right) \left(  (\partial_j \SU \times \SU)  \times \SU \right) \exp(\SU)
$$
and the thesis follows.
\end{proof}

Using (\ref{form-esp}) we obtain for the covariant derivatives of the matter field $\psi$ in polar form (\ref{polar-2}) with respect to the gauge field $\Gamma = \sum\, \Gamma_{j}\, dx^{j}$
\begin{equation} \label{cov-der-polar}
D_{j} \left( u\, e^{\SU}\, \psi_{0} \right) = \partial_{j} \left( u\, e^{\SU}\, \psi_{0} \right) + q u\, \Gamma_{j} e^{\SU}\, \psi_{0} = \left[ \partial_{j} u + u\, C(\SU,\partial_{j}\SU) + q u\, \Gamma_{j} \right]\, e^{\SU}\, \psi_{0}
\end{equation}
We have then
$$| D_{j} \left( u\, e^{\SU}\, \psi_{0} \right) |_{\C^{2}}^{2} = \left( \left[ \partial_{j} u\, + u\, C(\SU,\partial_{j}\SU) + q u\, \Gamma_{j} \right]\, e^{\SU}\, \psi_{0} , \left[ \partial_{j} u + u\, C(\SU,\partial_{j}\SU) + q u\, \Gamma_{j} \right]\, e^{\SU}\, \psi_{0} \right)_{\C^{2}}= $$
$$= \left( \partial_{j} u\, e^{\SU}\, \psi_{0}, \partial_{j} u\, e^{\SU}\, \psi_{0} \right)_{\C^{2}} + \left( \left[ u\, C(\SU,\partial_{j}\SU) + q u\, \Gamma_{j} \right]\, e^{\SU}\, \psi_{0} , \left[ u\, C(\SU,\partial_{j}\SU) + q u\, \Gamma_{j} \right]\, e^{\SU}\, \psi_{0}\right)_{\C^{2}}$$
since the other terms cancel out using $ \left[ u\, C(\SU,\partial_{j}\SU) + q u\, \Gamma_{j} \right]^{*} = - \left[ u\, C(\SU,\partial_{j}\SU) + q u\, \Gamma_{j} \right]$, which holds for matrices in $\su{2}$. Moreover, since $|e^{\SU}\, \psi_{0}|_{\C^{2}} =1$ and $\SU^{2} = -|\SU|^{2}\, Id$, we obtain
\begin{equation} \label{norma-cov-der-polar}
| D_{j} \left( u\, e^{\SU}\, \psi_{0} \right) |_{\C^{2}}^{2} = |\partial_{j} u|^{2} + u^{2}\, \left| C(\SU,\partial_{j}\SU) + q\, \Gamma_{j} \right|^{2}
\end{equation}
Finally we obtain for $\LL_{0}$
\begin{equation} \label{l0-polar}
\LL_{0}(u,\SU,\Gamma) = \frac 12 |\partial_{t} u|^{2} - \frac 12 |\nabla u|^{2} + \frac 12 \, u^{2}\, \left[ \left| C(\SU,\partial_{t}\SU) - q\, \Gamma_{0} \right|^{2} - \sum_{j=1}^{3} \left| C(\SU,\partial_{j}\SU) + q\, \Gamma_{j} \right|^{2} \right]
\end{equation}
For the second part of the Lagrangian density $\LL_{1}$, we recall that since $F_{kj} \in \su{2}$ we have
$$
\| F_{kj} \|^{2} := Tr(F_{kj}^{*}\, F_{kj}) = - Tr(F_{kj}^{2}) = 2 |{\mathbf F}_{kj}|^{2} = 2 \left| \partial_{k} \Gamma_{j} - \partial_{j} \Gamma_{k} -2q\, (\Gamma_{k} \times \Gamma_{j}) \right|^{2}
$$
where for ${\mathbf F}_{kj}$ we have used notation (\ref{fasi-su2}), and for $\Gamma_{j}$ notation (\ref{gf-pf}) and (\ref{lb-pf}). Hence from (\ref{elle1}) we get
\begin{equation} \label{l1-polar}
\LL_{1}(\Gamma) = \sum_{j=1}^{3}\, \left| \partial_{t} \Gamma_{j} + \partial_{j} \Gamma_{0} + 2q\, (\Gamma_{0} \times \Gamma_{j})\right|^{2} - \frac 12\, \sum_{k,j=1}^{3}\, \left| \partial_{k} \Gamma_{j} - \partial_{j} \Gamma_{k} -2q\, (\Gamma_{k} \times \Gamma_{j}) \right|^{2}
\end{equation}

\subsection{The equations in polar form} \label{eq-su2-pf}

We now write equations $(E\psi), (E\Gamma0), (E\Gamma j)$ in terms of the polar form (\ref{polar-2}) for $\psi$ and using notation (\ref{fasi-su2}), (\ref{der-pf})-(\ref{gf-pf}). To write $(E\psi)$, we first use Lemma \ref{lem:derivata-esp} to write
$$
D_{j}^{2} \psi = D_{j} \left[ \left( \partial_{j} u + u C(\SU, \partial_{j} \SU) +q u \Gamma_{j} \right) \, e^{\SU} \psi_{0} \right] = 
$$
$$
=\Big[ \partial_{j} \left( \partial_{j} u + u C(\SU, \partial_{j} \SU) +q u \Gamma_{j} \right) + \left( \partial_{j} u + u C(\SU, \partial_{j} \SU) +q u \Gamma_{j} \right)\, C(\SU, \partial_{j} \SU) +
$$
$$
+ q\Gamma_{j} \left( \partial_{j} u + u C(\SU, \partial_{j} \SU) +q u \Gamma_{j} \right) \Big]\, e^{\SU} \psi_{0} =
$$
$$
= \Big[ \partial_{j}^{2} u + u \Big( C(\SU, \partial_{j} \SU)^{2} + 2q \Gamma_{j} C(\SU, \partial_{j} \SU) + q^{2}\, \Gamma_{j}^{2} \Big) + \frac 1u \partial_{j} \Big( \left( C(\SU, \partial_{j} \SU) +q\Gamma_{j} \right) \, u^{2} \Big) \Big]\, e^{\SU} \psi_{0} =
$$
$$
= \Big[ \partial_{j}^{2} u - u\, \Big| C(\SU, \partial_{j} \SU) +q\Gamma_{j} \Big|^{2} + u\, [q\Gamma_{j}, C(\SU, \partial_{j} \SU)]+  \frac 1u \partial_{j} \Big( \left( C(\SU, \partial_{j} \SU) +q\Gamma_{j} \right) \, u^{2} \Big) \Big]\, e^{\SU} \psi_{0}
$$
Hence we obtain
$$
\begin{array}{c}
\Big[ \partial_{0}^{2} u - \Delta u - u\, \Big| C(\SU, \partial_{0} \SU) +q\Gamma_{0} \Big|^{2} + \sum_{j=1}^{3}\, u\, \Big| C(\SU, \partial_{j} \SU) +q\Gamma_{j} \Big|^{2} + f'(u) + \\[0.2cm]
+ \frac 1u \partial_{0} \Big( \left( C(\SU, \partial_{0} \SU) + q\Gamma_{0} \right) \, u^{2} \Big) + u\, [q\Gamma_{0}, C(\SU, \partial_{0} \SU)] + \\[0.2cm]
- \sum_{j=1}^{3}\, \frac 1u \partial_{j} \Big( \left( C(\SU, \partial_{j} \SU) +q\Gamma_{j} \right) \, u^{2} \Big) - u\, [q\Gamma_{j}, C(\SU, \partial_{j} \SU)] \Big]\, e^{\SU} \psi_{0} =0
\end{array}
$$
We remark that in the first row all terms in square brackets are assumed to multiply the identity matrix, which is not in $\su{2}$, whereas in the second and third lines all terms are in $\su{2}$. Hence using the covariant derivative \eqref{der-cov-gauge} and letting $\partial_{t} = -\partial_{0}$ we obtain the system
\begin{eqnarray} 
& \partial_{t}^{2} u - \Delta u - u\, \Big| C(\SU, \partial_{t} \SU) -q\Gamma_{0} \Big|^{2} + \sum_{j=1}^{3}\, u\, \Big| C(\SU, \partial_{j} \SU) +q\Gamma_{j} \Big|^{2} + f'(u) = 0 \label{epsi-pf}  \\[0.2cm]
& D_{0} \Big( \left( C(\SU, \partial_{0} \SU) +q\Gamma_{0} \right) \, u^{2} \Big) - \sum_{j=1}^{3}\, D_{j} \Big( \left( C(\SU, \partial_{j} \SU) +q\Gamma_{j} \right) \, u^{2} \Big) = 0 \label{epsi-pf-cont}
\end{eqnarray}
Notice that equation \eqref{epsi-pf-cont} is the continuity equation in terms of covariant derivatives for the matter current (see \eqref{egamma0-pf} and \eqref{egammaj-pf}).

We now consider $(E\Gamma0)$ with $A= \tau_{k}$, $k=1,2,3$. Since $D_{j}F_{0j}$ is in $\su{2}$ the first part simply reads
$$
\Re \left\langle \tau_{k}, D_{j}F_{0j} \right\rangle = \sum_{m=1}^{3}\, ( D_{j}F_{0j} )_{m}\, \Re \left\langle \tau_{k}, \tau_{m} \right\rangle = 2 ( D_{j}F_{0j} )_{k}
$$
The second part of $(E\Gamma0)$ involves $D_{0}\psi\cdot \psi^{*}$ and needs more attention. In general we can write
$$
\psi(t,x) = u(t,x) e^{\SU(t,x)}\psi_{0} = u(t,x) \left( \begin{array}{c} z \\ w \end{array} \right)
$$
with $z,w \in \C$ and $|z|^{2}+|w|^{2}=1$. Hence for all $j=0,1,2,3$ we can write
$$
\begin{array}{rl}
D_{j}\psi \cdot \psi^{*} & =  \Big[ u \partial_{j} u + u^{2} (C(\SU, \partial_{j} \SU) + q\Gamma_{j}) \Big]\, e^{\SU} \psi_{0} \cdot \left( e^{\SU} \psi_{0} \right)^{*} \\[0.3cm] & = \Big[ u \partial_{j} u + u^{2} (C(\SU, \partial_{j} \SU) + q\Gamma_{j}) \Big]\, \left( \begin{array}{cc} |z|^{2} & z \bar{w} \\ \bar{z} w & |w|^{2} \end{array} \right)
\end{array}
$$
We now compute
$$
\Re \left\langle \tau_{k}, \left( \begin{array}{cc} |z|^{2} & z \bar{w} \\ \bar{z} w & |w|^{2} \end{array} \right) \right\rangle =0 \qquad \forall\, k=1,2,3
$$
from which it follows that
$$
\Re \left\langle \tau_{k}, D_{j}\psi \cdot \psi^{*} \right\rangle = u^{2}\, \sum_{m=1}^{3}\, (C(\SU, \partial_{j} \SU) + q\Gamma_{j})_{m} \, \Re \left\langle \tau_{k}, \tau_{m}\, \left( \begin{array}{cc} |z|^{2} & z \bar{w} \\ \bar{z} w & |w|^{2} \end{array} \right) \right\rangle =
$$
$$
= u^{2} (C(\SU, \partial_{j} \SU) + q\Gamma_{j})_{k} \, \, tr  \left( \begin{array}{cc} |z|^{2} & z \bar{w} \\ \bar{z} w & |w|^{2} \end{array} \right) = u^{2} (C(\SU, \partial_{j} \SU) + q\Gamma_{j})_{k}
$$
Hence we finally get
\begin{equation} \label{egamma0-pf}
2 \, \sum_{j=1}^{3}\, D_{j} F_{0j} - q\, u^{2} [C(\SU, \partial_{t} \SU) - q\Gamma_{0}] =0
\end{equation}
The same argument works for $(E\Gamma j)$ and we get
\begin{equation} \label{egammaj-pf}
2\, D_{0} F_{0j} - 2\, \sum_{\ell\not= j}\, D_{\ell} F_{\ell j} + q\, u^{2} [C(\SU, \partial_{j} \SU) + q\Gamma_{j}] = 0, \qquad j=1,2,3 
\end{equation}

\subsection{The energy}\label{sec:energy} 
The conservation law of energy follows from the invariance of the Lagrangian under the action of the one-parameter group of time translations
$$
h_\lambda (t,x, u, \SU, \Gamma) = (t+\lambda, x, u, \SU, \Gamma)
$$
which yields by (\ref{int-primo})
\begin{equation}\label{en-polar-su2-1}
{\mathcal E} = \int_{\R^3}\, \left[ \frac{\partial \LL}{\partial (\partial_t u)} \, \partial_t u + \sum_{m=1}^3\, \frac{\partial \LL}{\partial (\partial_t S_m)} \, \partial_t S_m + \sum_{j=0}^3 \sum_{m=1}^3\, \frac{\partial \LL}{\partial (\partial_t \gamma_{j,m})} \, \partial_t \gamma_{j,m} - \LL \right]\, dx
\end{equation}
We now compute the different terms. The first two terms depend only on $\LL_0$ (\ref{l0-polar}) and by making the trivial computations using (\ref{der-ut})-(\ref{der-S3t}) we get
$$
\frac{\partial \LL}{\partial (\partial_t u)} \, \partial_t u + \sum_{m=1}^3\, \frac{\partial \LL}{\partial (\partial_t S_m)} \, \partial_t S_m - \LL_0 =
$$
$$
= \frac 12 |\partial_{t} u|^{2} + \frac 12 |\nabla u|^{2} + \frac 12 \, u^{2}\, \left[ \left| C(\SU,\partial_{t}\SU) \right|^2 - q^2\, \left| \Gamma_{0} \right|^{2} + \sum_{j=1}^{3} \left| C(\SU,\partial_{j}\SU) + q\, \Gamma_{j} \right|^{2} \right]
$$
Finally the third term only depends on $\LL_1$ (\ref{l1-polar}) and in particular we remark that $\LL_1$ does not depend on $\partial_t \gamma_{0,m}$. Using (\ref{der-gammajmt}) we get
$$
\sum_{j=0}^3 \sum_{m=1}^3\, \frac{\partial \LL_1}{\partial (\partial_t \gamma_{j,m})} \, \partial_t \gamma_{j,m} -\LL_1 = \sum_{j=1}^3\, \sum_{m=1}^3\, 2\, \Big( \partial_{t}\gamma_{j,m} + \partial_{j} \gamma_{0,m} +2q\, (\Gamma_{0} \times \Gamma_{j})_{m} \Big)  \, \partial_t \gamma_{j,m} -\LL_1 = 
$$
$$
= \sum_{j=1}^3 \left\langle  \partial_t \Gamma_{j} + \partial_j \Gamma_{0} +2q\, \Gamma_0 \times \Gamma_j, \partial_t \Gamma_{j} \right\rangle -\LL_1 =
$$
$$
= \sum_{j=1}^{3}\, \left| \partial_{t} \Gamma_{j} + \partial_{j} \Gamma_{0} + 2q\, (\Gamma_{0} \times \Gamma_{j})\right|^{2} + \frac 12\, \sum_{k,j=1}^{3}\, \left| \partial_{k} \Gamma_{j} - \partial_{j} \Gamma_{k} -2q\, (\Gamma_{k} \times \Gamma_{j}) \right|^{2} + 
$$
$$
+ \sum_{j=1}^3 \, \left\langle F_{0j} ,  \partial_j \Gamma_{0} + 2q\,  \Gamma_0 \times \Gamma_j \right\rangle
$$
where we have used (\ref{tr-pf}). Putting together all the terms we obtain that the density of energy is
\begin{equation}\label{ff-de}
\begin{array}{c}
E = \frac 12 |\partial_{t} u|^{2} + \frac 12 |\nabla u|^{2} + \frac 12 \, u^{2}\, \left[ \left| C(\SU,\partial_{t}\SU) - q\, \Gamma_{0} \right|^{2} + \sum_{j=1}^{3} \left| C(\SU,\partial_{j}\SU) + q\, \Gamma_{j} \right|^{2} \right] +\\[0.2cm]
+ \sum_{j=1}^{3}\, \left| \partial_{t} \Gamma_{j} + \partial_{j} \Gamma_{0} + 2q\, (\Gamma_{0} \times \Gamma_{j})\right|^{2} + \frac 12\, \sum_{k,j=1}^{3}\, \left| \partial_{k} \Gamma_{j} - \partial_{j} \Gamma_{k} -2q\, (\Gamma_{k} \times \Gamma_{j}) \right|^{2} +\\[0.2cm]
+ u^{2}\, \left[ \frac q2\, \left\langle C(\SU, \partial_t \SU), \Gamma_{0} \right\rangle - q^2\, \left| \Gamma_{0} \right|^{2}  \right] + \sum_{j=1}^3 \, \left\langle F_{0j} ,  \partial_j \Gamma_{0} + 2q\,  \Gamma_0 \times \Gamma_j \right\rangle
\end{array}
\end{equation}

\subsection{Existence of solitary waves and vortices} \label{sec:exist}

We now introduce a particular ansatz to find solitary waves solutions for equations (\ref{epsi-pf}), \eqref{epsi-pf-cont}, (\ref{egamma0-pf}) and (\ref{egammaj-pf}). We restrict ourselves to solutions with matter field written in polar form as
\begin{equation} \label{solit-su2}
\psi(t,x,y,z) = u( \sqrt{x^{2}+y^{2}}, z)\, e^{S(t,x)\, \tau_{m}}\, \psi_{0}, \qquad u\in \R^{+},\ m=1,2,3,\ |S(t,x)|\le \pi
\end{equation}
with $S(t,x) = \ell \vartheta(x) - \omega \, t$, where $\omega \in \R$, $\ell \in \Z$ and $\vartheta(x)$ is the longitude of the vector $\vec{r}=(x,y,z)$ in radial coordinates in $\R^3$. For the gauge field we assume analogously that 
\begin{equation} \label{campo-su2}
\Gamma_{0} = \gamma_{0}( \sqrt{x^{2}+y^{2}}, z)\ \tau_{m}, \qquad \left( \begin{array}{c} \Gamma_{1} \\ \Gamma_{2} \\ \Gamma_{3} \end{array}\right) = \gamma( \sqrt{x^{2}+y^{2}}, z)\, \nabla \vartheta \ \tau_{m}
\end{equation}
In particular we look for the radial profile $u$ of the matter field and the gauge field to be independent on time. We now substitute (\ref{solit-su2}) and (\ref{campo-su2}) in the equations, using
$$
\triangle \vartheta = \nabla \vartheta \cdot \nabla u = 0,\qquad C(\SU, \partial_{j} \SU) = \partial_{j} S(t,x)\, \tau_{m}
$$
$$
F_{0j} = -\partial_{j} \gamma_{0}(x)\, \tau_{m}, \qquad F_{\ell j} = \Big[ \nabla \times \gamma(x) \nabla \vartheta \Big]_{k}\, \tau_{m}
$$
with $j, \ell, k=1,2,3$, and $\ell\not= j \not= k$ such that $(\ell, j, k)$ is obtained by an even permutation from $(1,2,3)$. We find the following equations for the variables $(u, \omega, \ell, \gamma_{0}, \gamma)$, with equation \eqref{epsi-pf-cont} identically satisfied,
\begin{eqnarray}
& -\triangle u(x) + \Big[ | (\ell + q\gamma(x)) \nabla \vartheta |^{2} - (\omega+q\gamma_{0}(x))^{2} \Big] \, u + f'(u) =0 \label{matter-su2} \\[0.2cm]
& -2\, \triangle \gamma_{0}(x) + q\, (\omega + q\gamma_{0}(x))\, u^{2} =0 \label{gauss-su2} \\[0.2cm]
& 2\, \nabla \times \left( \nabla \times \gamma(x) \nabla \vartheta \right) + q\, (\ell + q\gamma(x))\, u^{2}\, \nabla \vartheta =0 \label{rotore-su2}
\end{eqnarray}
which are very similar to (\ref{prima-kgm-p1}), (\ref{seconda-kgm-p}) and (\ref{terza-kgm-p}) if we let $u(t,x) = u(t)$, $S(t,x) = \ell \vartheta(x) - \omega t$, $\varphi(t,x) = -\gamma_{0}(x)$ and $\A(t,x) = \gamma(x) \nabla \vartheta(x)$.

\begin{theorem} \label{main-su2}
Let $W$ satisfy (W1)-(W4). Then for all $\ell \in \Z$ there exists $q_0 >0$ such that for every $q \in (0,q_0)$ the system (\ref{matter-su2})-(\ref{rotore-su2}) admits a finite energy solution $(u,\omega,\ell,\gamma_{0},\gamma)$ in the sense of distributions with: $u=u(\sqrt{x^2+y^2},z)\not\equiv 0$; $\omega >0$; $\gamma_{0}=\gamma_{0}(\sqrt{x^2+y^2},z)\not\equiv 0$; $\gamma= \gamma(\sqrt{x^2+y^2},z)$. Moreover, $\gamma \equiv 0$ if and only if $\ell = 0$.
\end{theorem}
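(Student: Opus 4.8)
The plan is to \emph{reduce Theorem \ref{main-su2} to the Abelian result Theorem \ref{main-bf}} by a linear rescaling of the gauge potentials together with a rescaling of the coupling constant. Comparing the reduced system (\ref{matter-su2})--(\ref{rotore-su2}) with the Klein--Gordon--Maxwell system (\ref{prima-kgm-p1})--(\ref{terza-kgm-p}) evaluated on the vortex ansatz (\ref{vortices-kgm}) (that is, with $u$, $\varphi$, $\A = a(\sqrt{x^2+y^2},z)\,\nabla\vartheta$ time-independent and $S = \ell\vartheta - \omega t$), one sees that the two systems are identical except for the way the coupling enters the Yang--Mills sector: the Laplacians of the gauge potentials in (\ref{gauss-su2}) and (\ref{rotore-su2}) carry an extra factor $2$. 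I would therefore set $\bar q := q/\sqrt 2$ and look for solutions of (\ref{matter-su2})--(\ref{rotore-su2}) of the form
$$
\gamma_0 = -\frac{1}{\sqrt 2}\,\varphi, \qquad \gamma = \frac{1}{\sqrt 2}\,a,
$$
with $(u,\omega,\varphi,\A = a\nabla\vartheta)$ a solution of the Abelian system with coupling $\bar q$ in the vortex form (\ref{vortices-kgm}).

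The first step is the (elementary) check that this substitution is an exact equivalence. Since $q\gamma = \bar q a$ and $q\gamma_0 = -\bar q\varphi$, equation (\ref{matter-su2}) becomes exactly (\ref{prima-kgm-p1}) evaluated on (\ref{vortices-kgm}); a short computation using $\bar q^2 = q^2/2$ turns (\ref{gauss-su2}) into (\ref{seconda-kgm-p}) and (\ref{rotore-su2}) into (\ref{terza-kgm-p}), both evaluated on (\ref{vortices-kgm}); and, exactly as noted in the text, the continuity equation (\ref{epsi-pf-cont}) --- like (\ref{prima-kgm-p2}) --- is identically satisfied on this ansatz. Granting this, fix $\ell \in \Z$; let $\bar q_0 > 0$ be the threshold provided by Theorem \ref{main-bf} for this $\ell$, and set $q_0 := \sqrt 2\,\bar q_0$. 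For $q \in (0,q_0)$ one has $\bar q = q/\sqrt 2 \in (0,\bar q_0)$, so Theorem \ref{main-bf} supplies a finite-energy distributional solution $(u,\omega,\varphi,\A)$ with $u = u(\sqrt{x^2+y^2},z)\not\equiv 0$, $\omega > 0$, $\varphi = \varphi(\sqrt{x^2+y^2},z)\not\equiv 0$, $\A = a(\sqrt{x^2+y^2},z)\,\nabla\vartheta$, $u\in\hat H^1$, $\varphi\in{\mathcal D}^{1,2}$, $\A\in({\mathcal D}^{1,2})^3$, and $\A\equiv 0$ if and only if $\ell = 0$. Setting $\gamma_0 := -\frac{1}{\sqrt 2}\varphi$ and $\gamma := \frac{1}{\sqrt 2}a$, the tuple $(u,\omega,\ell,\gamma_0,\gamma)$ is then a distributional solution of (\ref{matter-su2})--(\ref{rotore-su2}) with all the asserted symmetry and non-triviality properties; in particular $\gamma_0\not\equiv 0$, and $\gamma\equiv 0$ if and only if $a\equiv 0$, i.e.\ if and only if $\ell = 0$.

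It remains to verify that the energy (\ref{ff-de}) is finite on the ansatz (\ref{solit-su2})--(\ref{campo-su2}). The key observation is that $\Gamma_0$ and all the $\Gamma_j$ take values in the one-dimensional subalgebra $\R\tau_m\subset\su{2}$, so every Lie bracket $\Gamma_k\times\Gamma_j$ vanishes and $F_{0j} = -\partial_j\gamma_0\,\tau_m$ is again proportional to $\tau_m$; using this, an integration by parts against (\ref{gauss-su2}) shows that the terms in the last two lines of (\ref{ff-de}) involving $\Gamma_0$ reorganise, and with the above rescaling the total energy coincides with the Abelian energy (\ref{en-ansatz}) of the corresponding Klein--Gordon--Maxwell solution, hence is finite by Theorem \ref{main-bf}. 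Finally, feeding $(u,\omega,\ell,\gamma_0,\gamma)$ back into (\ref{solit-su2})--(\ref{campo-su2}) produces, as explained in the text, a solution of the full Euler--Lagrange system $(E\psi)$, $(E\Gamma0)$, $(E\Gamma j)$.

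Since the whole argument is a change of variables, there is no deep difficulty. The delicate points are: (i) pinning down the rescaling constants and signs so that not only the two systems but also their energies match exactly; and (ii) controlling the genuinely non-Abelian contributions to (\ref{ff-de}) --- the Lie-bracket terms and the $\langle F_{0j},\partial_j\Gamma_0\rangle$ term --- which have no Abelian analogue. Both are resolved by the fact that the chosen ansatz is ``internally Abelian'', i.e.\ takes values in $\R\tau_m$, so all brackets vanish; I expect the energy bookkeeping of step (ii), in particular the integration by parts using the Gauss-law equation (\ref{gauss-su2}), to be the most technical part of the proof.
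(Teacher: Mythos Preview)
Your reduction is correct and the overall strategy --- transform the $SU(2)$ system (\ref{matter-su2})--(\ref{rotore-su2}) into the Abelian Klein--Gordon--Maxwell system and invoke Theorem \ref{main-bf} --- is exactly the paper's. The only difference is in \emph{which} variable absorbs the stray factor of $2$ in the Yang--Mills equations: you rescale the gauge potentials and the coupling ($\gamma_0=-\varphi/\sqrt 2$, $\gamma=a/\sqrt 2$, $\bar q=q/\sqrt 2$), whereas the paper rescales the matter field, setting $v=u/\sqrt 2$ and $\tilde f(s)=\tfrac12 f(\sqrt 2\,s)$, so that $(v,\omega,\gamma_0,\gamma)$ solves the Abelian system with the \emph{same} coupling $q$ but with modified nonlinearity $\tilde f$. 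Your route has the small advantage that $f$ is left untouched, so (W1)--(W4) need no re-checking; the paper's route keeps $q$ fixed but must (and does) verify that $\tilde f$ again satisfies (W1)--(W4) with $\tilde s_0=s_0/\sqrt 2$, $\tilde s_1=s_1/\sqrt 2$. Either way the existence threshold $q_0$ is only asserted to exist, so the factor $\sqrt 2$ is harmless. Your treatment of the energy --- noting that the ansatz is ``internally Abelian'' so all brackets $\Gamma_k\times\Gamma_j$ vanish, and then integrating the $\langle F_{0j},\partial_j\Gamma_0\rangle$ term by parts against the Gauss law --- is in fact more detailed than the paper's, which simply evaluates (\ref{ff-de}) on the ansatz and appeals to Theorem \ref{main-bf}.
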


\begin{proof}
By letting $v = \frac{1}{\sqrt{2}}\, u$ and $\tilde f(s) = \frac{1}{2}\, f(\sqrt{2} u)$,  we have that $(u,\omega,\gamma_{0},\gamma)$ is a solution of the system (\ref{matter-su2})-(\ref{rotore-su2}) if and only if $(v,\omega,\gamma_{0},\gamma)$ is a solution of
$$
\begin{array}{c}
-\triangle v + \Big[ | (\ell + q\gamma(x)) \nabla \vartheta |^{2} - (\omega+q\gamma_{0}(x))^{2} \Big] \, v + \tilde f'(v) =0 \\[0.2cm]
-\triangle \gamma_{0} + q\, (\omega + q\gamma_{0}(x))\, v^{2} =0 \\[0.2cm]
\nabla \times \left( \nabla \times \gamma \nabla \vartheta \right) + q\, (\ell + q\gamma(x))\, v^{2}\, \nabla \vartheta =0
\end{array}
$$
It is immediate to verify that $\tilde f$ satisfies assumptions (W1)-(W4), by choosing $\tilde s_{0} = \frac{1}{\sqrt{2}}\, s_{0}$ and $\tilde s_{1} = \frac{1}{\sqrt{2}}\, s_{1}$ in (W3) and (W4). Hence we can apply Theorem \ref{main-bf} to obtain the existence.

Finally, the integral of the energy density \eqref{ff-de} for the solutions of form \eqref{solit-su2} and \eqref{campo-su2} writes
$$
{\mathcal E}(u,\omega,\ell,\gamma_{0},\gamma)=\int_{\R^{3}} \left[ \frac 12 |\nabla u|^{2} + \frac 12 u^{2} \left[ \omega^{2} - q^{2}\, \gamma_{0}^{2} + \frac{(\ell + q\, \gamma)^{2}}{r^{2}} \right] - |\nabla \gamma_{0}|^{2} + |\nabla \times \gamma \nabla \vartheta|^{2} \right] \, dx
$$
The finiteness of the energy follows again by Theorem \ref{main-bf}.
\end{proof}

\appendix

\section{Derivatives of the Lagrangian density}

\noindent Let the Lagrangian density 
$$
\LL(u,\SU,\Gamma) = \LL_{0}(u,\SU,\Gamma) + \LL_{1}(\Gamma) - W(u)
$$
be expressed in terms of the variables $(u,\SU)$ for the matter field and $\Gamma = (\Gamma_{j})$, $j=0,1,2,3$, with $\Gamma_{j}= \sum_{m}\, \gamma_{j,m} \tau_{m}$, for the gauge field, as in (\ref{l0-polar}) and ({\ref{l1-polar}). In order to obtain the densities of the first integrals for equations (\ref{epsi-pf})-(\ref{egamma0-pf})-(\ref{egammaj-pf}), we need to compute the derivatives of $\LL$ with respect to $\partial_{t} u$, $\partial_{t} S_{m}$, $\partial_{t} \gamma_{j,m}$, for $m=1,2,3$ and $j=0,1,2,3$, as shown in (\ref{int-primo}). We collect here the derivatives.
\begin{equation} \label{der-ut}
\frac{\partial \LL}{\partial(\partial_{t}u)} = \partial_{t} u 
\end{equation}
\begin{equation} \label{der-S1t}
\frac{\partial \LL}{\partial(\partial_{t}S_{1})} = \begin{array}{l} u^{2}\, \Big[ \left( C(\SU, \partial_{t}\SU) - q\, \Gamma_{0} \right)_{1} \left( 1 - \frac 12 (2-\sin 2) (S_{2}^{2}+S_{3}^{2}) \right) + \\[0.2cm] + \left( C(\SU, \partial_{t}\SU) - q\, \Gamma_{0} \right)_{2} \left( \frac 12 (2-\sin 2) S_{1}S_{2} - \frac 12 (1-\cos 2) S_{3} \right) + \\[0.2cm] + \left( C(\SU, \partial_{t}\SU) - q\, \Gamma_{0} \right)_{3} \left( \frac 12 (2-\sin 2) S_{1}S_{3} + \frac 12 (1-\cos 2) S_{2} \right) \Big] \end{array}
\end{equation}
\begin{equation} \label{der-S2t}
\frac{\partial \LL}{\partial(\partial_{t}S_{2})} = \begin{array}{l} u^{2}\, \Big[ \left( C(\SU, \partial_{t}\SU) - q\, \Gamma_{0} \right)_{1} \left( \frac 12 (2-\sin 2) S_{1}S_{2} + \frac 12 (1-\cos 2) S_{3} \right) + \\[0.2cm] + \left( C(\SU, \partial_{t}\SU) - q\, \Gamma_{0} \right)_{2} \left( 1 - \frac 12 (2-\sin 2) (S_{1}^{2}+S_{3}^{2}) \right)  + \\[0.2cm] + \left( C(\SU, \partial_{t}\SU) - q\, \Gamma_{0} \right)_{3} \left( \frac 12 (2-\sin 2) S_{2}S_{3} - \frac 12 (1-\cos 2) S_{1} \right)  \Big] \end{array}
\end{equation}
\begin{equation} \label{der-S3t}
\frac{\partial \LL}{\partial(\partial_{t}S_{3})} = \begin{array}{l} u^{2}\, \Big[ \left( C(\SU, \partial_{t}\SU) - q\, \Gamma_{0} \right)_{1} \left( \frac 12 (2-\sin 2) S_{1}S_{3} - \frac 12 (1-\cos 2) S_{2} \right) + \\[0.2cm] + \left( C(\SU, \partial_{t}\SU) - q\, \Gamma_{0} \right)_{2} \left( \frac 12 (2-\sin 2) S_{2}S_{3} + \frac 12 (1-\cos 2) S_{1} \right)   + \\[0.2cm] + \left( C(\SU, \partial_{t}\SU) - q\, \Gamma_{0} \right)_{3} \left( 1 - \frac 12 (2-\sin 2) (S_{1}^{2}+S_{2}^{2}) \right)   \Big] \end{array}
\end{equation}
\begin{equation} \label{der-gammajmt}
\frac{\partial \LL}{\partial(\partial_{t}\gamma_{j,m})} = 2\, \Big( \partial_{t}\gamma_{j,m} + \partial_{j} \gamma_{0,m} +2q\, (\Gamma_{0} \times \Gamma_{j})_{m} \Big), \qquad j,m=1,2,3
\end{equation}


\begin{thebibliography}{99}

\bibitem{matrix} R. Bellman, ``Introduction to matrix analysis'' (reprint of the second edition), Society for Industrial and Applied Mathematics (SIAM), Philadelphia, PA, 1997

\bibitem{bf} V. Benci, D. Fortunato, \emph{Solitary waves in Abelian gauge theories}, Adv. Nonlinear Stud. \textbf{8} (2008), 327--352

\bibitem{bf-arx} V. Benci, D. Fortunato, \emph{Hylomorphic vortices in Abelian gauge theories}, 	arXiv:0903.3322v1 [math-ph]

\bibitem{bf-vort} V. Benci, D. Fortunato, \emph{Spinning Q-balls for the Klein-Gordon-Maxwell equations}, Commun. Math. Phys. \textbf{295} (2010), 639--668

\bibitem{coleman} S. Coleman, \emph{Q-Balls}, Nucl. Phys. B \textbf{262} (1985),263--283; erratum: \textbf{269} (1986), 744--745

\bibitem{gelf-fom} I.M. Gelfand, S.V. Fomin, ``Calculus of variations", Prentice-Hall, Englewood Cliffs, NJ, 1963

\bibitem{long} E. Long, \emph{Existence and stability of solitary waves in non-linear Klein-Gordon-Maxwell equations}, Rev. Math. Phys. \textbf{18} (2006), 747--779 

\bibitem{rub} V. Rubakov, ``Classical theory of gauge fields'', Princeton University Press, Princeton, NJ, 2002

\bibitem{yangL} Y. Yang, ``Solitons in field theory and nonlinear analysis'', Springer-Verlag, New York, 2001
\end{thebibliography}
\end{document}